\newtheorem{theorem}{Theorem}
\newtheorem{proposition}{Proposition}
\newtheorem{definition}{Definition}
\def\qed{\ifvmode\Realemovelastskip\fi
{\unskip\nobreak\hfil\penalty50\hbox{}\nobreak\hfil \hbox{\vrule
height1.2ex width1.2ex}\parfillskip=0pt \finalhyphendemerits=0
\par\smallskip}}
\def\qedr{\ifvmode\Realemovelastskip\fi
{\unskip\nobreak\hfil\penalty50\hbox{}\nobreak\hfil \hbox{
$\diamond$}\parfillskip=0pt \finalhyphendemerits=0
\par\smallskip}}
\newenvironment{proof}{\noindent{\sl Proof:~~~}}{\quad \qed}
\def\beq{\begin{equation}}
\def\eeq{\end{equation}}
\def\bea{\begin{eqnarray}}
\def\eea{\end{eqnarray}}
\def\beann{\begin{eqnarray*}}
\def\eeann{\end{eqnarray*}}
\def\beasn{\begin{sneqnarray}}
\def\eeasn{\end{sneqnarray}}
\def\ben{\begin{enumerate}}
\def\een{\end{enumerate}}
\def\bit{\begin{itemize}}
\def\eit{\end{itemize}}
\def\derpar#1#2{\displaystyle\frac{\partial{#1}}{\partial{#2}}}
\def\derpars#1#2#3{\displaystyle\frac{\partial^2{#1}}{\partial{#2}\partial{#3}}}
\def\restric#1#2{\left.#1\right|_{#2}}
\def\W{{\cal W}}
\def\C{{\cal C}}
\def\N{{\cal N}}
\def\bfxi{{\boldsymbol{\xi}}}
\def\bfalpha{{\boldsymbol{\alpha}}}
\def\bfnu{{\boldsymbol{\nu}}}
\def\bfmu{{\boldsymbol{\mu}}}
\def\bfv{{\mathbf{v}}}
\def\g{{\mathfrak{g}}}
\def\vf{{\mathfrak{X}}}
\def\df{{\mit\Omega}}
\def\Lag{{\cal L}}
\def\d{{\rm Diff}}
\def\d{{\rm d}}
\def\Nat{\mathbb{N}}
\def\R{\mathbb{R}}
\def\pr{\operatorname{pr}}
\def\Tan{{\rm T}}
\def\Lie{\mathop{\rm L}\nolimits}
\def\inn{\mathop{i}\nolimits}
\def\Cinfty{{\rm C}^\infty}
\def\codim{\operatorname{codim}}
\def\tabaddress#1{{\small\it\begin{tabular}[t]{c}#1
\\[1.2ex]\end{tabular}}}
\def\qed{\ifvmode\removelastskip\fi
{\unskip\nobreak\hfil\penalty50\hbox{}\nobreak\hfil \hbox{\vrule
height1.2ex width1.2ex}\parfillskip=0pt \finalhyphendemerits=0
\par\smallskip}}
\title{Unified formalism for higher-order variational problems and its applications in optimal control}
\author{
{\sc  Leonardo Colombo\thanks{{\bf e}-{\it mail}: leo.colombo@icmat.es} }\\
\vspace{5mm}
   \tabaddress{Instituto de Ciencias Matem\'{a}ticas (CSIC-UAM-UC3M-UCM). \\
   C/ Nicol\'{a}s Cabrera 15. 28049 Madrid. Spain} \\
{\sc  Pedro Daniel Prieto-Mart\'\i nez\thanks{{\bf e}-{\it mail}: peredaniel@ma4.upc.edu} }\\
   \tabaddress{Departamento de Matem\'atica Aplicada IV.
   Edificio C-3, Campus Norte UPC\\
   C/ Jordi Girona 1. 08034 Barcelona. Spain}
}
\begin{document}

\maketitle

\thispagestyle{empty}

\begin{abstract}
In this paper we consider an intrinsic point of view to describe
the equations of motion for higher-order variational problems
with constraints
on higher-order trivial principal bundles. Our techniques are an
adaptation of the classical Skinner-Rusk approach for the case
of Lagrangian dynamics with higher-order constraints. We study a
regular case where it is possible to establish a symplectic
framework and, as a consequence, to obtain a unique vector field
determining the dynamics. As an interesting application we deduce
the equations of motion for optimal control of underactuated
mechanical systems defined on principal bundles.
\end{abstract}

\bigskip
\noindent {\bf Key words}: {\sl Higher-order systems, Lagrangian and
Hamiltonian mechanics, Underactuated mechanical system, Constrained
variational calculus, Optimal control, Skinner-Rusk formalism}

\vbox{\raggedleft AMS s.\,c.\,(2000): 70H50, 22E70, 49J15, 53C80.}\null
\markright{\rm L. Colombo, P.D. Prieto-Mart\'{\i}nez:
\sl Higher-order mechanical systems \& optimal control appl.}

\clearpage

\tableofcontents

\clearpage

%%%%%%%%%%%%%%%%%%%%%%%%%%%%%%%%%%%%%%%%%%%%%%%%%%
%%%%%%%%%%%%%%%%%%%%%%%%%%%%%%%%%%%%%%%%%%%%%%%%%%
\section{Introduction}
\label{sec:Introduction}
%%%%%%%%%%%%%%%%%%%%%%%%%%%%%%%%%%%%%%%%%%%%%%%%%%
%%%%%%%%%%%%%%%%%%%%%%%%%%%%%%%%%%%%%%%%%%%%%%%%%%

The aim of this work is to describe, in an intrinsic way,
higher-order Euler-Lagrange equations on trivial principal bundles.
The main motivation is the analysis of a class of optimal control
problems that are important in a wide range of contexts, such as the
search of less cost devices in industrial processes, aerospace and
defense, marine and automotive systems, electro-mechanical systems
and robotic, among others. The optimal control problem consists in
finding a trajectory of the state variables and control inputs,
solution of the controlled Euler-Lagrange equations given initial
and final conditions, and minimizing a cost function.

The study of higher-order tangent bundles has been developed in the last decades for different
reasons, as a training field to understand field theory,
theoretical physics, relativistic mechanics, classification of
higher-order symmetries, among others
\cite{proc:Cantrijn_Crampin_Sarlet86,art:Carinena_Lopez92,
art:DeLeon_Lacomba89,art:DeLeon_Martin94_1,art:DeLeon_Pitanga_Rodrigues94,
book:DeLeon_Rodrigues85,art:Gracia_Pons_Roman91,art:Krupkova00,book:Saunders89}.
In the last decade,
higher-order variational problems had an extraordinary impact
in the design and planning of trajectories, interpolation problems
in Riemannian manifolds, optimization problems, optimal control
applications, higher-order jet groups and particle methods, image
registration methods for computational anatomy, etc.
\cite{art:Camarinha_SilvaLeita_Crouch01,art:Colombo_Martin_Zuccalli10,
art:GayBalmaz_Holm_Meier_Ratiu_Vialard12_1,art:GayBalmaz_Holm_Meier_Ratiu_Vialard12_2,
art:Grillo_Zuccalli12,art:Hinkle_Muralidharan_Fletcher12,art:Hussein_Bloch07,
art:Jacobs_Ratiu_Desbrun12}

In 1983, R. Skinner and R. Rusk introduced a formulation for the dynamics
of an autonomous mechanical system which combines the Lagrangian and
Hamiltonian features \cite{art:Skinner_Rusk83}.
The aim of this formulation is to obtain a
common framework for both regular and singular dynamics, obtaining
simultaneously the Lagrangian and Hamiltonian formulations of the
dynamics. Over the years, the Skinner-Rusk framework, or unified framework, has been extended
in many directions: explicit time-dependent systems using a jet bundle
language \cite{art:Barbero_Echeverria_Martin_Munoz_Roman08,art:Cortes_Martinez_Cantrijn02},
vakonomic mechanics and the comparison between the
solutions of vakonomic and nonholonomic mechanics
\cite{art:Cortes_deLeon_Martin_Martinez02},
higher-order dynamical systems \cite{art:Prieto_Roman11,art:Prieto_Roman12_1},
first-order and higher-order classical field theories
\cite{art:Campos_DeLeon_Martin_Vankerschaver09,art:Echeverria_Lopez_Marin_Munoz_Roman04,art:Vitagliano10},
and optimal control applications
\cite{art:Barbero_Echeverria_Martin_Munoz_Roman07,art:Benito_Martin05,art:Colombo_Martin11}.

There is an interesting class of mechanical control systems,
underactuated mechanical systems, which are characterized by the
fact that there are more degrees of freedom than actuators. This
type of systems is quite different from a mathematical and
engineering perspective than fully actuated control systems (that
is, where all the degrees of freedom are actuated). Underactuated
systems include spacecraft, underwater vehicles, mobile robots,
helicopters, wheeled vehicles, mobile robots, underactuated
manipulators... (see
\cite{art:Baillieul99,book:Bloch03,book:Bullo_Lewis05} and
references therein).

Our main objective in this paper is to characterize geometrically
the equations of motion for a higher-order autonomous system with
constraints using an extension of the Skinner-Rusk formalism for
higher-order trivial principal bundles, and apply this to the
optimal control problems of an underactuated mechanical system. The
main results of this work can be found in Section 3, where we give a
general method to deal with explicit and implicit, constrained and
unconstrained mechanical systems.

The organization of the paper is as follows.
In Section \ref{sec:Background} we
introduce some geometric constructions which are used along the
paper. In particular, the Skinner-Rusk formalism for first-order
mechanical systems, the Gotay-Nester-Hinds algorithm, some geometric aspects
of higher-order tangent bundles, and Hamilton equations for a
Hamiltonian system defined on the cotangent bundle of a higher-order
trivial principal bundle. In Section \ref{sec:HOVariationalProblems}
we introduce the Pontryagin
bundle $\Tan^{*}(\Tan^{k-1}M) \times G\times k\g\times k\g^*$,
where we introduce the dynamics using a
presymplectic Hamiltonian formalism, and we deduce the $k$th-order
Euler-Lagrange equations in this context. Since the system is
presymplectic, it is necessary to analyze the consistency of the
dynamics using a constraint algorithm. We show that our techniques
are easily adapted to the case of constrained dynamics. As an
illustration of the applicability of our formulation, we analyze in
Section \ref{sec:OptimalControlUnderactuated}
the case of underactuated control of mechanical systems
and, as a particular example, the optimal control problem of an
underactuated vehicle in $SE(2)\times\mathbb{S}^{1}.$

All the manifolds are real, second countable and $\Cinfty$.
The maps and the structures are assumed to be $\Cinfty$.
Sum over repeated indices is understood.

%%%%%%%%%%%%%%%%%%%%%%%%%%%%%%%%%%%%%%%%%%%%%%%%%%
%%%%%%%%%%%%%%%%%%%%%%%%%%%%%%%%%%%%%%%%%%%%%%%%%%
\section{Mathematical background}
\label{sec:Background}
%%%%%%%%%%%%%%%%%%%%%%%%%%%%%%%%%%%%%%%%%%%%%%%%%%
%%%%%%%%%%%%%%%%%%%%%%%%%%%%%%%%%%%%%%%%%%%%%%%%%%

In this section we give the notation used along this work, and
the basic mathematical background needed about higher-order tangent
bundles. There is also a sketch of the Gotay-Nester-Hinds algorithm and
the Skinner-Rusk formalism for first-order systems. Finally we will derive
Hamilton equations for higher-order trivial principal bundles.

%%%%%%%%%%%%%%%%%%%%%%%%%%%%%%%%%%%%%%%%%%%%%%%%%%
\subsection{The Lagrangian-Hamiltonian unified formalism}
\label{sec:SkinnerRusk}
%%%%%%%%%%%%%%%%%%%%%%%%%%%%%%%%%%%%%%%%%%%%%%%%%%

(See \cite{art:Skinner_Rusk83} for details.)

Let $Q$ be a $n$-dimensional smooth manifold modeling the
configuration space of a first-order dynamical system with $n$
degrees of freedom, and $\Lag \in \Cinfty(\Tan Q)$ a Lagrangian
function describing the dynamics of the system.

Let us consider the bundle
$$
\W = \Tan Q \times_{Q} \Tan^{*}Q \, .
$$
This bundle is endowed with canonical projections over each factor,
namely $\pr_1 \colon \W \to \Tan Q$ and $\pr_2 \colon \W \to
\Tan^{*}Q$. Using these projections, and the canonical projections
of the tangent and cotangent bundle of $Q$, we introduce the
following diagram
$$
\xymatrix{
\ & \ & \Tan Q \times_Q \Tan^*Q \ar[dll]_{\pr_1}  \ar[drr]^{\pr_2} & \ & \ \\
\Tan Q \ar[drr]_{\tau_Q} & \ & \ & \ & \Tan^*Q \ar[dll]^{\pi_Q} \\
\ & \ & Q & \ & \
}
$$

Local coordinates in $\W$ are constructed as follows:
if $(U,\varphi)$ is a local chart of $Q$ with $\varphi=(q^{A})$,
$1 \leqslant A \leqslant n$, then the induced local charts in $\Tan Q$ and
$\Tan^{*}Q$ are $(\tau_{Q}^{-1}(U),(q^A,v^A))$
and $(\pi_{Q}^{-1}(U),(q^A,p_A))$, respectively.
Therefore, the natural coordinates in $\W$ are $(q^{A},v^{A},p_A)$.
Observe that $\dim\W = 3n$. Using these coordinates, the above
projections have the following local expressions
$$
\pr_1(q^A,v^A,p_A) = (q^A,v^A)\quad ; \quad
\pr_2(q^A,v^A,p_A) = (q^A,p_A) \, .
$$

The bundle $\W$ is endowed with some canonical geometric structures.
First, let $\theta\in\df^1(\Tan^{*}Q)$ be the Liouville $1$-form on
the cotangent bundle and $\omega = - \d\theta \in
\df^{2}(\Tan^{*}Q)$ the canonical symplectic form on $\Tan^{*}Q$.
From this we can define a $2$-form $\Omega$ in $\W$ as
$$
\Omega := \pr_2^*\omega \in \Omega^{2}(\W) \, .
$$
It is clear that $\Omega$ is a closed 2-form, since
$$
\Omega = \pr_{2}^{*}(-\d\theta) = -\d\pr_{2}^{*}\theta \, .
$$
Nevertheless, this form is degenerate, and therefore is a
presymplectic form. This is easy to check in coordinates. Bearing in mind
the local expression of the canonical symplectic form of the cotangent bundle,
which is $\omega = \d q^{A} \wedge \d p_A$, and the local expression
of the projection $\pr_2$ given above, we have
$$
\Omega =  \pr_{2}^{*}(\d q^{A} \wedge \d
p_A) = \pr_{2}^{*}(\d q^{A}) \wedge \pr_{2}^{*}(\d p_A) =
\d\pr_{2}^{*}(q^{A}) \wedge \d\pr_{2}^{*}(p_A) = \d q^{A} \wedge \d
p_A \, .
$$
From this local expression, it is clear that
the kernel of $\Omega$ is given locally by,
$$
\ker\Omega = \left\langle \derpar{}{v^A} \right\rangle = \vf^{V(\pr_2)}(\W) \, ,
$$
where $\vf^{V(\pr_2)}(\W)$ denotes the module of vector fields of $\W$
which are vertical with respect to the projection $\pr_2$
(that is, $\vf^{V(\pr_2)}(\W) = \ker(\Tan\pr_2)$).
Therefore the $2$-form $\Omega$ is degenerate.

\begin{definition}
Let $p \in Q$ be a point, $v_p \in \Tan_pQ$ a tangent vector at $p$,
and $\alpha_p \in \Tan^*_pQ$ a covector on $p$. Then we define the
\textnormal{coupling function} $\C \in \Cinfty(\W)$ as
$$
\begin{array}{rcl}
\C \colon \W & \longrightarrow & \R \\
(p,v_p,\alpha_p) & \longmapsto & \langle\alpha_p,v_p\rangle
\end{array}
$$
where $\langle \alpha_p, v_p \rangle \equiv \alpha_p(v_p)$ is the canonical pairing between
elements of $\Tan_pQ$ and $\Tan^{*}_pQ$.
\end{definition}

If we consider a local chart on $p \in Q$ such
that $\alpha_p = \restric{p_A\d q^{A}}{p}$, $v_p = v^{A}
\restric{\derpar{}{q^A}}{p}$, then the local expression of $\C$ is
$$
\C(p,v_p,\alpha_p) = \langle \alpha_p,v_p \rangle
= \left\langle  \restric{p_A \d q^{A}}{p},\restric{v^{A}\derpar{}{q^A}}{p} \right\rangle = \restric{p_Av^{A}}{p} \, .
$$

Finally, we define the Hamiltonian function $H\in\Cinfty(\W)$ by
$$
H = \C - \pr_{1}^{*}\Lag\, ,
$$
whose local expression is
$$
H(q^A,v^A,p_A) = p_Av^A - \Lag(q^A,v^A) \, .
$$

Hence, we have constructed a presymplectic Hamiltonian system
$(\W,\Omega,H)$. The dynamics for this systems is given by equation
$$
\inn(X)\Omega = \d H \, ,
$$
where $X \in \vf(\W)$ is the Hamiltonian vector field of the system.

%%%%%%%%%%%%%%%%%%%%%%%%%%%%%%%%%%%%%%%%%%%%%%%%%%
\subsection{The constraint algorithm}
\label{sec:GNHAlgorithm}
%%%%%%%%%%%%%%%%%%%%%%%%%%%%%%%%%%%%%%%%%%%%%%%%%%

In this subsection we briefly review the constraint algorithm
for presymplectic systems.
(See \cite{art:Gotay_Nester79,art:Gotay_Nester80,art:Gotay_Nester_Hinds78} for
details).

By definition, if $(M_1,\Omega)$ is a symplectic manifold then the
equation
\begin{equation}\label{pres}
\inn(X)\Omega = \alpha
\end{equation}
has a unique solution $X\in\mathfrak{X}(M_1)$ for every $\alpha \in
\df^{1}(M_1)$ that we consider. Nevertheless, if $\Omega$ is closed
and degenerate (that is, presymplectic), then the above equation may
not have a solution defined on the whole manifold $M_1$, but only in
some points of $M_1$. The tuple $(M_1,\Omega,\alpha)$ is said to be a
presymplectic system. The aim of the \textsl{Gotay-Nester-Hinds
algorithm}, or \textsl{constraint algorithm}, is to find a final
submanifold $M_f \hookrightarrow M_1$ such that the equation
\eqref{pres} has solutions in $M_f$ (if such submanifold exists).
More precisely, the constraint algorithm returns the maximal
submanifold $M_f$ of $M_1$ such that there exists a vector field $X \in \vf(M_f)$
satisfying equation (\ref{pres}) with support on
$M_f.$

The algorithm proceeds as follows. Since $\Omega$ is degenerate,
then equation (\ref{pres}) has no solution in general, or the
solutions are not defined everywhere. In the most favorable case,
equation (\ref{pres}) admits a global (but not necessarily unique)
solution $X \in \vf(M_1)$. Otherwise, we select the subset of points
of $M_1$, where such a solution exists, that is,
\begin{align*}
M_2 &:= \{ p \in M_1 \colon \mbox{there exists } X_p \in \Tan_pM_1 \mbox{ satisfying } \inn(X_p)\Omega_p = \alpha_p \} \\
&\ = \{ p \in M_1 \colon (\inn(Y)\alpha)(p) = 0 \mbox{ for every } Y \in \ker\Omega \} \, ,
\end{align*}
and we assume that it is a submanifold of $M_1$. Then, equation
\eqref{pres} admits a solution $X$ defined at all points of $M_2$,
but $X$ is not necessarily tangent to $M_2$, and thus it does not
necessarily induce a dynamics on $M_2$. So we impose a tangency
condition along $M_2$, and we obtain a new submanifold
$$
M_3 := \{ p \in M_2 \colon \mbox{there exists } X_p \in \Tan_pM_2 \mbox{ satisfying }\inn(X_p)\Omega_p = \alpha_p \} \, .
$$
A solution $X$ to equation \eqref{pres} does exist in $M_3$ but,
again, such an $X$ is not necessarily tangent to $M_3$, and this
condition must be required. Following this process, we obtain a
sequence of submanifolds
$$
\cdots M_l \hookrightarrow \cdots \hookrightarrow M_2 \hookrightarrow M_1
$$
where the general description of $M_{l+1}$ is
$$
M_{l+1} := \{ p \in M_{l} \colon \mbox{there exists } X_p \in \Tan_pM_l \mbox{ satisfying } \inn(X_p)\Omega_p = \alpha_p \} \, .
$$
% and $j_l$ denotes the inclusions.

If the algorithm terminates at a nonempty set, in the sense that at
some $s\geqslant 1$ we have  $M_{l+1} = M_l$ for every $l \geqslant
s$, then we say that $M_s$ is the final constraint submanifold which
is denoted by $M_f$. It may still happen that $\dim M_f=0$, that is,
$M_f$ is a discrete set of points, and in this case the system does
not admit a proper dynamics. But in the case when $\dim M_f>0$, by
construction, there exists a well-defined solution $X$ of equation
\eqref{pres} along $M_f$.

%%%%%%%%%%%%%%%%%%%%%%%%%%%%%%%%%%%%%%%%%%%%%%%%%%
\subsection{Higher-order tangent bundles}
\label{sec:HOTangentBundles}
%%%%%%%%%%%%%%%%%%%%%%%%%%%%%%%%%%%%%%%%%%%%%%%%%%

In this subsection we recall some basic facts of the higher-order
tangent bundle theory. We particularize our construction to the
case when the configuration space is a Lie group $G$. (See
\cite{proc:Cantrijn_Crampin_Sarlet86,book:DeLeon_Rodrigues85,book:Saunders89} for
details.)

Let $Q$ be a $n$-dimensional smooth manifold, and $k \in \Nat$. The
\textsl{$k$th order tangent bundle} of $Q$, denoted by $\Tan^kQ$, is
the $(k+1)n$-dimensional smooth manifold made of the $k$-jets of
curves $\phi \colon \R \to Q$ with source at $0 \in \R$; that is,
$\Tan^kQ = J_0^k(\R,Q)$. It is a submanifold of $J^k(\R,Q).$ A point
in $\Tan^kQ$ is denoted $j_0^k\phi$, where $\phi$ is a
representative of the equivalence class.

We have the following natural projections: if $r \leqslant k$,
$$
\begin{array}{rcl}
\rho^k_r \colon \Tan^{k}Q & \longrightarrow & \Tan^rQ \\
j^{k}_0\phi & \longmapsto & j^r_0\phi
\end{array} \quad ; \quad
\begin{array}{rcl}
\beta^{k} \colon \Tan^{k}Q & \longrightarrow & Q \\
j^{k}_0\phi & \longmapsto & \phi(0)
\end{array} \, .
$$
Observe that $\rho^k_0 = \beta^k$, where $\Tan^0Q$ is canonically identified with $Q$,
$\rho^s_r \circ \rho^k_s = \rho^k_r$ for every $r \leqslant s \leqslant k$, and
$\rho^k_k = \textnormal{Id}_{\Tan^kQ}$.

From a local chart $(U,\varphi)$ of $Q$, where $\varphi = (\varphi^A)$, $1 \leqslant A \leqslant n$,
the induced local coordinates in $\Tan^kQ$ are constructed as follows:
let $\phi \colon \R \to Q$ be a curve such that $\phi(0) \in U$.
Then, denoting $\phi^A = \varphi^A \circ \phi$,
the point $j^k_0\phi$ is given in $(\beta^{k})^{-1}(U)$ as $(q_0^A,\ldots,q_k^A) \equiv (q_i^A)$,
$0 \leqslant i \leqslant k$, where
$$
q_0^A = \phi^A(0) \quad , \quad
q_i^A = \restric{\frac{d^i\phi^A}{dt^i}}{t=0} \, .
$$
When there is no risk of confusion, we use the standard conventions,
$q_0^A = q^A$, $q_1^A = \dot{q}^A$ and $q_2^A = \ddot{q}^A$.
Using these coordinates, the local expression of the canonical projections are
$$
\rho^k_r(q_0^A,\ldots,q_k^A) = (q_0^A,\ldots,q_r^A) \quad ; \quad
\beta^{k}(q_0^A,\ldots,q_k^A) = (q_0^A) \, .
$$

Now, assume that $Q = G$ is a finite dimensional Lie group, and let us consider
the left-multiplication on itself
$$
\begin{array}{rcl}
G \times G & \longrightarrow & G \\
(g,h) & \longmapsto & gh
\end{array} \, .
$$
If we denote $\pounds_g(h) = gh$ for every $g,h \in G$ the left-translation,
it is obvious that $\pounds_g \colon G \to G$ is a diffeomorphism for every $g \in G$.

\noindent\textbf{Remark:}
The same is valid for the right-translation, but in the sequel we only work with the
left-translation, for the sake of simplicity.

The left-translation enables us to trivialize the tangent and cotangent bundles
of $G$ as follows
$$
\begin{array}{rcl}
\Tan G & \longrightarrow & G \times \g \\
(g,\dot{g}) & \longmapsto & (g,\xi) = (g,g^{-1}\dot{g}) = (g,\Tan_g\pounds_{g^{-1}}\dot{g})
\end{array} \quad ; \quad
\begin{array}{rcl}
\Tan^*G & \longrightarrow & G \times \g^* \\
(g,\alpha_g) & \longmapsto & (g,\alpha) = (g,\Tan_e^*\pounds_g(\alpha_g))
\end{array} \, ,
$$
where $\g = \Tan_eG$ is the Lie algebra of $G$ and $e \in G$ is the neutral element
of the group.

For higher-order tangent bundles, we can also use the left-translation to identify the
$k$th-order tangent bundle of $G$, $\Tan^{k}G$, with $G \times k\g$ as follows: if
$g \colon I \subseteq \R \to G$ is a curve, we define
$$
\begin{array}{rcl}
\Upsilon^{k} \colon \Tan^{k}G & \longrightarrow & G \times k\g \\
j_0^kg & \longmapsto & (g(0), g^{-1}(0)\dot{g}(0), \restric{\frac{d}{dt}}{t=0}(g^{-1}(t)\dot{g}(t)),\ldots,\restric{\frac{d^{k-1}}{dt^{k-1}}}{t=0}(g^{-1}(t)\dot{g}(t)))
\end{array} \, .
$$
It is clear that $\Upsilon^{k}$ is a diffeomorphism.
If we denote by $\xi(t)=g^{-1}(t)\dot{g}(t)$, we can rewrite the above expression as
$$
\Upsilon^{k}(j^k_0g)=(g,\xi^0,\xi^1, \ldots, \xi^{k-1}) \; ,
$$
where
$$
g = g(0) \quad ; \quad \xi^{i} = \restric{\frac{d^i}{dt^i}}{t=0}\xi(t) = \restric{\frac{d^{i}}{dt^{i}}}{t=0}(g^{-1}(t)\dot{g}(t)) \, , \qquad 0 \leqslant i \leqslant k-1 \, .
$$
We will
indistinctly use the notation $\xi^{0}=\xi$,
$\xi^{1}=\dot{\xi}$, where there is no danger of confusion.

In this case, the canonical projections $\rho^{k}_{r}$ and $\beta^{k}$ are denoted by
$$
\begin{array}{rcl}
\tau^k_r \colon \Tan^{k}G & \longrightarrow & \Tan^rG \\
j^{k}_0g & \longmapsto & j^r_0g
\end{array} \quad ; \quad
\begin{array}{rcl}
\tau^{k}_G \colon \Tan^{k}G & \longrightarrow & G \\
j^{k}_0g & \longmapsto & g(0)
\end{array} \, .
$$
Using the previous identifications, we have
$$
\tau^k_r(g,\xi^0,\ldots,\xi^{k-1}) = (g,\xi^0,\ldots,\xi^{r-1}) \quad ; \quad
\tau^{k}_G(g,\xi^0,\ldots,\xi^{k-1}) = g \, .
$$
As before, $\tau^s_r \circ \tau^k_s = \tau^k_r$, $\tau^k_0 = \tau_G^k$,
and $\tau_k^k = \textnormal{Id}_{G}$.

%%%%%%%%%%%%%%%%%%%%%%%%%%%%%%%%%%%%%%%%%%%%%%%%%%
\subsection{Higher-order Hamilton equations in $M \times G$}
%%%%%%%%%%%%%%%%%%%%%%%%%%%%%%%%%%%%%%%%%%%%%%%%%%

Let us consider the manifold $Q = M \times G$,
where $M$ is a $m$-dimensional smooth
manifold and $G$ is a finite dimensional Lie group.
Using the results of Section \ref{sec:HOTangentBundles}, we have
$$
\Tan^*(\Tan^{k-1}Q) = \Tan^*(\Tan^{k-1}(M \times G)) \simeq \Tan^*(\Tan^{k-1}M) \times  G\times (k-1)\g\times k \g^* \, .
$$

In order to geometrically derive Hamilton equations for higher-order variational
problems we need to equip the previous space with a
symplectic structure. Thus, we construct a Liouville $1$-form
$\theta$ and a canonical symplectic 2-form $\omega$ by pull-backing
the canonical Liouville forms in $\Tan^*(\Tan^{k-1}M)$ and $G \times (k-1)\g \times k\g$.
Let $(q_i^A,p_A^i)$ be the natural coordinates in $\Tan^*(\Tan^{k-1}M)$, and
denote by $\bfxi\in (k-1)\g$
and $\bfalpha \in k\g^*$ with components $\bfxi = (\xi^0,\ldots,
\xi^{k-2})$ and $\bfalpha=(\alpha_0,\ldots,\alpha_{k-1})$. Then,
after a straightforward computation. we deduce that
\begin{align*}
&\theta_{(q_{i}^{A},p_{A}^{i},g,\bfxi,  \bfalpha)}(\widetilde{F}_{i}^{A},\widetilde{G}_{A}^{i},\bfxi_1, \bfnu^1) = p_{A}^{i}\widetilde{F}_{i}^{A} + \langle \bfalpha, \bfxi_1\rangle\, ,\\
&\omega_{(q_{i}^{A},p_{A}^{i},g,\bfxi,  \bfalpha)}\left( (\widetilde{F}_{i}^{A},\widetilde{G}_{A}^{i},\bfxi_1, \bfnu^1), (\bar{F}_{i}^{A},\bar{G}_{A}^{i},\bfxi_2, \bfnu^2)\right)
=
\sum_{i=0}^{k-1}\left(\widetilde{F}_{i}^{A}\bar{G}_{A}^{i}-\bar{F}_{i}^{A}\widetilde{G}_{A}^{i}\right)\\[5pt]
&\hspace{250pt} {} + \langle \bfnu^2, \bfxi_1\rangle -\langle \bfnu^1, \bfxi_2\rangle + \langle\alpha_0, [ \xi^{0}_1, \xi^{0}_2]\rangle
\end{align*}
where ${\bfxi}_a\in k\g$ and $\bfnu^a\in k\g^*$,
$a=1, 2$ with components ${\bfxi}_a=(\xi^{i}_a)$ and ${\bfnu}^a=(\nu_{i}^a)$,
$0\leqslant i\leqslant k-1$, where
each component $\xi^{i}_a\in \g$ and $\nu_{i}^a\in \g^*$. Observe that $\alpha_0$ comes from the
identification $\Tan^*G \simeq G\times \g^*$.

Given a Hamiltonian function $H \in \Cinfty(\Tan^*(\Tan^{k-1}(M \times G)))$, we
compute
\begin{align*}
\d H_{(q_{i}^{A},p_{A}^{i},g,\bfxi,\bfalpha)}(\bar{F}_{i}^{A},\bar{G}_{A}^{i},\bfxi_2, \bfnu^2)
&=  \bar{F}_{i}^{A}\derpar{H}{q_{i}^{A}}(q_{i}^{A},p_{A}^{i},g,\bfxi,\bfalpha) + \bar{G}_{A}^{i}\derpar{H}{p_{A}^{i}}(q_{i}^{A},p_{A}^{i},g,\bfxi,\bfalpha) \\
&+ \left\langle \pounds_g^*\derpar{H}{g}(q_{i}^{A},p_{A}^{i},g,\bfxi,\bfalpha) , {\xi}^{0}_2\right\rangle +\sum_{i=0}^{k-2}\left\langle \derpar{H}{\xi^i}(q_{i}^{A},p_{A}^{i},g,\bfxi,\bfalpha),{\xi}^{i+1}_2\right\rangle \\
&+  \left\langle \bfnu^2, \derpar{H}{\bfalpha}(q_{i}^{A},p_{A}^{i},g,\bfxi,\bfalpha)\right\rangle \, .
\end{align*}

Now we can derive Hamilton equations for a higher-order dynamical system
in a trivial principal bundle. First, let us compute the Hamiltonian vector
field $X_H \in \vf(\Tan^*(\Tan^{k-1}M) \times G \times (k-1)\g \times k\g)$
satisfying the geometric equation $\inn(X_H)\omega = \d H$. If $X_H$ is locally
given by $X_H(q_{i}^{A},p_{A}^{i},g,\bfxi,\bfalpha)=(\widetilde{F}_{i}^{A},\widetilde{G}_{A}^{i},{\bfxi}_1,{\bfnu}^1)$,
then the previous equation gives the following system of equations
\begin{align*}
\widetilde{F}_{i}^{A} &= \derpar{H}{p_A^i}(q_{i}^{A},p_{A}^{i},g,\bfxi,  \bfalpha) \, , \\
\widetilde{G}_{A}^{i} &= -\derpar{H}{q_i^A}(q_{i}^{A},p_{A}^{i},g,\bfxi,  \bfalpha) \, , \\
\bfxi_1 &= \derpar{H}{\bfalpha}(q_{i}^{A},p_{A}^{i},g,\bfxi,  \bfalpha) \, ,\\
\nu_{0}^1 &= -\pounds_g^* \derpar{H}{g}(q_{i}^{A},p_{A}^{i},g,\bfxi,  \bfalpha) + ad_{\xi^{0}_1}^*\alpha_{0} \, ,\\
\nu_{i+1}^1 &= - \derpar{H}{\xi^i}(q_{i}^{A},p_{A}^{i},g,\bfxi,  \bfalpha) \, ,\qquad 0\leqslant i\leqslant k-2 \, .
\end{align*}
Finally, if $\gamma(t) = (q_i^A(t),p_A^i(t),g(t),\bfxi(t),\bfalpha(t))$ is an
integral curve of $X_H$, then from the condition $X_H \circ \gamma = \dot{\gamma}$
we obtain the higher-order Hamilton equations
\begin{align*}
\dot{q}_{i}^{A} &= \frac{\partial H}{\partial p_{A}^{i}}(q_{i}^{A},p_{A}^{i},g,\bfxi,  \bfalpha) \, , \qquad 0\leqslant i\leqslant k-1 \, ,\\
\dot{p}_{A}^{i} &= -\frac{\partial H}{\partial q_{i}^{A}}(q_{i}^{A},p_{A}^{i},g,\bfxi,  \bfalpha) \, , \qquad 0\leqslant i\leqslant k-1 \, ,\\
\dot{g} &= g\derpar{H}{\alpha_0}(q_{i}^{A},p_{A}^{i},g,\bfxi,  \bfalpha) \, ,\\
\dot{\xi}^{i} &= \derpar{H}{\alpha_{i}}(q_{i}^{A},p_{A}^{i},g,\bfxi,  \bfalpha) \, , \qquad 1\leqslant i\leqslant k-1\, ,\\
\dot{\alpha}_{0} &= -\pounds_g^*\derpar{H}{g}(q_{i}^{A},p_{A}^{i},g,\bfxi,  \bfalpha) + ad_{{\partial H}/{\partial \alpha_{0}}}^*\alpha_{0}\, ,\\
\dot{\alpha}_{i+1} &= -\derpar{H}{\xi^{i}}(q_{i}^{A},p_{A}^{i},g,\bfxi,  \bfalpha) \, ,\qquad 0\leqslant i\leqslant k-2 \, .
\end{align*}

%%%%%%%%%%%%%%%%%%%%%%%%%%%%%%%%%%%%%%%%%%%%%%%%%%
%%%%%%%%%%%%%%%%%%%%%%%%%%%%%%%%%%%%%%%%%%%%%%%%%%
\section{Geometric formalism for higher-order variational problems}
\label{sec:HOVariationalProblems}
%%%%%%%%%%%%%%%%%%%%%%%%%%%%%%%%%%%%%%%%%%%%%%%%%%
%%%%%%%%%%%%%%%%%%%%%%%%%%%%%%%%%%%%%%%%%%%%%%%%%%

In this section, we  describe the main results of the paper. First,
we intrinsically  derive  the equations of motion for Lagrangian
systems defined on  higher-order trivial principal bundles and,
finally, we extend the results to the case of variationally
constrained problems.

%%%%%%%%%%%%%%%%%%%%%%%%%%%%%%%%%%%%%%%%%%%%%%%%%%
\subsection{Unconstrained problem}
\label{sec:UnconstrainedProblem}
%%%%%%%%%%%%%%%%%%%%%%%%%%%%%%%%%%%%%%%%%%%%%%%%%%

%%%%%%%%%%%%%%%%%%%%%%%%%%%%%%%%%%%%%%%%%%%%%%%%%%
\subsubsection{Geometrical setting}
%%%%%%%%%%%%%%%%%%%%%%%%%%%%%%%%%%%%%%%%%%%%%%%%%%

Let $Q$ be a finite dimensional smooth manifold modeling the
configuration space of a $k$th-order dynamical system, and let
$\Lag \in \Cinfty(\Tan^kQ)$ be a Lagrangian function describing the
dynamics of the system. Consider the Pontryagin bundle $\W =
\Tan^{k}Q \times_{\Tan^{k-1}Q} \Tan^*(\Tan^{k-1}Q)$ in a similar way as in
\cite{art:Prieto_Roman11}. Now, if we take $Q = M \times G$, where
$M$ is a $m$-dimensional smooth manifold and $G$ is a finite
dimensional Lie group, we have
\begin{align*}
\W &= \Tan^{k}(M \times G) \times_{\Tan^{k-1}(M \times G)} \Tan^*(\Tan^{k-1}(M \times G)) \\
& \simeq (\Tan^{k}M \times_{\Tan^{k-1}M} \Tan^*(\Tan^{k-1}M)) \times (\Tan^{k}G \times_{\Tan^{k-1}G} \Tan^*(\Tan^{k-1}G))
= \W_M \times \W_G
\end{align*}
where we denote $\W_M := \Tan^{k}M \times_{\Tan^{k-1}M}
\Tan^*(\Tan^{k-1}M)$ and $\W_G := \Tan^{k}G \times_{\Tan^{k-1}G}
\Tan^*(\Tan^{k-1}G)$. Using left-trivialization and the results in
Section \ref{sec:HOTangentBundles} we have the following identifications
$$
\Tan^{k}G \simeq G \times k\g; \quad  \quad \Tan^*(\Tan^{k-1}G) \simeq G \times (k-1)\g \times k\g^*,
$$
and therefore the manifold $\W_G$ admits the identification
$$
\W_G \simeq G \times k\g \times k\g^*.
$$
Taking into account all the previous comments, we can consider the
diagram illustrating the situation {\small
$$
\xymatrix{
\ & \ & \ & \W \ar[dll]_{\pr_1} \ar[drr]^{\pr_2} & \ & \ & \ \\
\ & \W_M \ar[dl]_{\widetilde{\pr}_1} \ar[dr]^{\widetilde{\pr}_2} & \ &  & \ & \W_G \ar[dl]_{\overline{\pr}_1} \ar[dr]^{\overline{\pr}_2} & \ \\
\Tan^kM \ar[dr]_{\rho^{k}_{k-1}} &  & \Tan^*(\Tan^{k-1}M) \ar[dl]^{\pi_{\Tan^{k-1}M}} & \ & G \times k\g \ar[dr]_{\tau^{k}_{k-1}} & \ & G \times (k-1)\g \times k\g^* \ar[dl]^{\pi_{\Tan^{k-1}G}} \\
\ & \Tan^{k-1}M & \ &  & \ & G \times (k-1)\g & \
}
$$}
where all the maps are the canonical projections.

Let $(q_i^A,q_k^A,p_A^i)$, where $0 \leqslant i \leqslant k-1$ and
$1 \leqslant A \leqslant m$, be a set of local coordinates in $\W_M$
(see \cite{art:Prieto_Roman11} for details), and
$(g,\bfxi,\xi^{k-1},\bfalpha)$, where $\bfxi =
(\xi^0,\ldots,\xi^{k-2}) \in (k-1)\g$ and $\bfalpha =
(\alpha_0,\ldots,\alpha_{k-1}) \in k\g^*$, a set of local
coordinates in $\W_G$ (see \cite{art:Colombo_Martin11} for details).
Then, the induced natural coordinates in $\W = \W_M \times \W_G$ are
$(q_i^A,q_k^A,p_A^i,g,\bfxi,\xi^{k-1},\bfalpha)$. Using these
coordinates, the above projections have the following local
expressions
\begin{align*}
\pr_1(q_i^A,q_k^A,p_A^i,g,\bfxi,\xi^{k-1},\bfalpha) = (q_i^A,q_k^A,p_A^i) \quad &; \quad \pr_2(q_i^A,q_k^A,p_A^i,g,\bfxi,\xi^{k-1},\bfalpha) = (g,\bfxi,\xi^{k-1},\bfalpha) \\
\widetilde{\pr}_1(q_i^A,q_k^A,p_A^i) = (q_i^A,q_k^A) \quad &; \quad \widetilde{\pr}_2(q_i^A,q_k^A,p_A^i) = (q_i^A,p_A^i) \\
\overline{\pr}_1(g,\bfxi,\xi^{k-1},\bfalpha) = (g,\bfxi,\xi^{k-1}) \quad &; \quad \overline{\pr}_2(g,\bfxi,\xi^{k-1},\bfalpha) = (g,\bfxi,\bfalpha)
\end{align*}

The bundle $\W$ is endowed with some canonical geometric structures. First, let $\omega_{k-1} \in \df^{2}(\Tan^*(\Tan^{k-1}M))$
and $\omega_{G\times(k-1)\g} \in \df^{2}(G \times (k-1)\g \times k\g^*)$ be the canonical symplectic forms
in $\Tan^*(\Tan^{k-1}M)$ and $\Tan^*(\Tan^{k-1}G) \simeq G \times (k-1)\g \times k\g^*$, respectively.
Then, we can consider the presymplectic forms $\Omega_M = \widetilde{\pr}_2^*\,\omega_{k-1} \in \df^{2}(\W_M)$
and $\Omega_G = \overline{\pr}_2^*\,\omega_{G \times (k-1)\g} \in \df^{2}(\W_G)$. Then we define
the following presymplectic form in $\W$
\begin{equation}\label{eqn:PresymplecticFormDef}
\Omega = \pr_1^*\Omega_M + \pr_2^*\Omega_G \in \df^{2}(\W) \, .
\end{equation}
Observe that since $\ker\Omega_M = \vf^{V(\widetilde{\pr}_2)}(\W_M)$ and $\ker\Omega_G = \vf^{V(\overline{\pr}_2)}(\W_G)$,
we have
$$
\ker\Omega = \vf^{V(\widetilde{\pr}_2 \circ \pr_1)}(\W) \cap \vf^{V(\pr_2)}(\W) + \vf^{V(\overline{\pr}_2 \circ \pr_2)}(\W)  \cap \vf^{V(\pr_1)}(\W) \, .
$$
In natural coordinates, recall that the forms $\omega_{k-1}$ and $\omega_{G\times(k-1)\g}$ are locally given by
\begin{align*}
(\omega_{k-1})_{(q_i^A,p_A^i)} \left( (\tilde{F}_i^A,\tilde{G}_A^i),(\bar{F}_i^A,\bar{G}_A^i) \right) &= \sum_{i=0}^{k-1} \left( \tilde{F}_i^A\bar{G}_A^i - \bar{F}_i^A\tilde{G}_A^i \right) \\
(\omega_{G\times(k-1)\g})_{(g,\bfxi,\bfalpha)}\left( (\bfxi_1,\bfnu^1),(\bfxi_2,\bfnu^2) \right)
&= \langle \bfnu^2,\bfxi_1 \rangle - \langle \bfnu^1,\bfxi_2 \rangle + \langle \alpha_0, [\xi_1^0,\xi_2^0] \rangle \\
&= \sum_{i=0}^{k-1} \left( \langle \nu_i^2,\xi_1^i \rangle -\langle \nu_i^1,\xi_2^i \rangle \right) + \langle \alpha_0, [\xi_1^0,\xi_2^0] \rangle
\end{align*}
where $\bfxi_a = (\xi_a^0,\ldots,\xi_a^{k-1}) \in k\g$, $\bfnu^a = (\nu_0^a,\ldots,\nu_{k-1}^a) \in k\g^*$ ($a = 1,2$),
$(q_i^A,p_A^i) \in \Tan^*(\Tan^{k-1}M)$ and $(\tilde{F}_i^A,\tilde{G}_A^i),(\bar{F}_i^A,\tilde{G}_A^i) \in \Tan_{(q_i^A,p_A^i)}\Tan^*(\Tan^{k-1}M)$.
Therefore, the presymplectic form $\Omega \in \df^{2}(\W)$ is locally given by
\begin{equation}\label{eqn:OmegaLocal}
\Omega_{(q_i^A,q_k^A,p_A^i,g,\bfxi,\xi^{k-1},\bfalpha)}\left( X_1, X_2 \right)
= \sum_{i=0}^{k-1} \left( \tilde{F}_i^A\bar{G}_A^i - \bar{F}_i^A\tilde{G}_A^i + \langle \nu_i^2,\xi_1^i \rangle - \langle \nu_i^1,\xi_2^i \rangle \right) + \langle \alpha_0, [\xi_1^0,\xi_2^0] \rangle \, ,
\end{equation}
where $X_1 = (\tilde{F}_i^A,\tilde{F}_k^A,\tilde{G}_A^i,\bfxi_1,\xi_1^{k},\bfnu^1),
X_2 = (\bar{F}_i^A,\bar{F}_k^A,\bar{G}_A^i,\bfxi_2,\xi_2^{k},\bfnu^2) \in \vf(\W)$.
Moreover, a local basis for $\ker\Omega$ is
\begin{equation}\label{eqn:KernelOmegaLocal}
\ker\Omega = \left\langle \derpar{}{q_k^A} \, , \, \derpar{}{\xi^{k-1}} \right\rangle \, .
\end{equation}

The second relevant canonical structure in $\W$ is the
\textsl{coupling function} $\C \in \Cinfty(\W)$.
First, since $\Tan^kM$
is canonically embedded into $\Tan(\Tan^{k-1}M)$, we can define a canonical pairing
between the elements of $\Tan^*(\Tan^{k-1}M)$ and the elements of $\Tan^{k}M$
as a function in $\Cinfty(\W_M)$. Indeed, let $p \in \Tan^{k}M$ be a point,
$q = \rho_{k-1}^{k}(p)$ its projection to $\Tan^{k-1}M$ and $\alpha_q \in \Tan^*_q(\Tan^{k-1}M)$
a covector. Then, the function $\C_M \in \Cinfty(\W_M)$ is defined as
$$
\begin{array}{rcl}
\C_M \colon \Tan^{k}M \times_{\Tan^{k-1}M} \Tan^*(\Tan^{k-1}M) & \longrightarrow & \R \\
(p,\alpha_q) & \longmapsto & \langle \alpha_q , j_k(p)_q \rangle_{k-1}
\end{array} \ ,
$$
where $j_k \colon \Tan^kM \hookrightarrow \Tan(\Tan^{k-1}M)$ is the canonical embedding,
$j_k(p)_q \in \Tan_q(\Tan^{k-1}M)$ the corresponding tangent vector to $\Tan^{k-1}M$ in $q$, and
$\langle \cdot\,,\,\cdot \rangle_{k-1} \colon \Tan(\Tan^{k-1}M) \times \Tan^*(\Tan^{k-1}M) \to \R$
the canonical pairing. In natural coordinates, if $p = (q_0^A,\ldots,q_k^A)$, then
$q = \rho^k_{k-1}(p) = (q_0^A,\ldots,q_{k-1}^A)$, and the canonical embedding
is locally given by $j_k(p) = (q_0^A,\ldots,q_{k-1}^A,q_1^A,\ldots,q_k^A)$.
Hence, if $j_k(p)_q = q_{i+1}^A\restric{\derpar{}{q_i^A}}{q} \in \Tan_q(\Tan^{k-1}M)$
and $\alpha_q = p_A^i\restric{\d q_i^A}{q} \in \Tan^*_q(\Tan^{k-1}M)$, then the local expression
of the function $\C_M$ is
$$
\C_M(q_i^A,q_k^A,p_A^i) = \restric{p_A^iq_{i+1}^A}{q} \, .
$$

On the other hand, we can define a canonical pairing in $\W_G \simeq G \times k\g \times k\g^*$
as a function $\C_G \in \Cinfty(\W_G)$ as follows
$$
\begin{array}{rcl}
\C_G \colon G \times k\g \times k\g^* & \longrightarrow & \R \\
(g,\bfxi,\xi^{k-1},\bfalpha) & \longmapsto & \langle \alpha_i,\xi^i \rangle
\end{array} \ ,
$$

Bearing in mind the above constructions, we can give the following definition.

\begin{definition}
The \textnormal{coupling function} $\C \in \Cinfty(\W)$ is defined as
\begin{equation}\label{eqn:CouplingFunctionDef}
\C = \pr_1^*\C_M + \pr_2^*\C_G \, .
\end{equation}
\end{definition}

In the induced natural coordinates of $\W$, bearing in mind the
local expressions of both $\C_M$ and $\C_G$, and the coordinate
expressions of the projections $\pr_1$ and $\pr_2$, we have that the
coupling function $\C \in \Cinfty(\W)$ is locally given by
$$
\C(q_i^A,q_k^A,p_A^i,g,\bfxi,\xi^{k-1},\bfalpha) = \sum_{i=0}^{k-1} \left( p_A^iq_{i+1}^A + \langle \alpha_i, \xi^i \rangle \right)  \, .
$$

Now, given a Lagrangian function $\Lag \in \Cinfty(\Tan^k(M \times
G))$, we can define the \textsl{Hamiltonian function} $H \in
\Cinfty(\W)$ as
\begin{equation}\label{eqn:HamiltonianFunctionDef}
H = \C - \pi^*\Lag \, ,
\end{equation}
where $\pi \colon \W \to \Tan^k(M \times G)$ is the natural projection, and whose local expression is
$$
H(q_i^A,q_k^A,p_A^i,g,\bfxi,\xi^{k-1},\bfalpha) = \sum_{i=0}^{k-1} \left( p_A^iq_{i+1}^A + \langle \alpha_i, \xi^i \rangle \right) - \Lag(q_i^A,q_k^A,g,\bfxi,\xi^{k-1}) \, .
$$

%%%%%%%%%%%%%%%%%%%%%%%%%%%%%%%%%%%%%%%%%%%%%%%%%%
\subsubsection{Dynamical equation}
\label{sec:DynEqUnconstrained}
%%%%%%%%%%%%%%%%%%%%%%%%%%%%%%%%%%%%%%%%%%%%%%%%%%

The dynamical equation for a presymplectic Hamiltonian system $(\W,\Omega,H)$ is geometrically written as
\begin{equation}\label{eqn:DynEq}
\inn(X) \Omega = \d H \ , \quad \mbox{for } X \in \vf(\W) \, .
\end{equation}
Then, following \cite{art:Gotay_Nester_Hinds78}, we have
\begin{proposition}\label{prop:CompatibilitySubmanifold}
A solution $X \in \vf(\W)$ to equation \eqref{eqn:DynEq} exists only on the points of the submanifold
$\W_c \stackrel{j_c}{\hookrightarrow} \W$ defined by
$$
\W_c = \left\{ p \in \W \colon (\inn(Y)\d H)(p) = 0 \, , \ \forall \, Y \in \ker\Omega \right\} \ .
$$
\end{proposition}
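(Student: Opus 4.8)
The plan is to reduce the statement to the pointwise linear-algebra characterization already used to describe the submanifold $M_2$ in the Gotay-Nester-Hinds algorithm of Section \ref{sec:GNHAlgorithm}, now applied to the presymplectic system $(\W,\Omega,H)$ with the $1$-form $\alpha = \d H$. Fix a point $p \in \W$ and consider the linear map $\hat\Omega_p \colon \Tan_p\W \to \Tan_p^*\W$ given by $X_p \mapsto \inn(X_p)\Omega_p$. Equation \eqref{eqn:DynEq} admits a solution $X_p \in \Tan_p\W$ precisely when $(\d H)_p$ lies in the image of $\hat\Omega_p$, so the whole argument reduces to identifying this image.

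First I would note that, by the definition of $\ker\Omega$, the kernel of $\hat\Omega_p$ is exactly $\ker\Omega_p$. Next, the image of $\hat\Omega_p$ is contained in the annihilator $(\ker\Omega_p)^{\circ} \subset \Tan_p^*\W$: for every $X_p \in \Tan_p\W$ and every $Y_p \in \ker\Omega_p$, skew-symmetry of $\Omega$ gives $(\inn(X_p)\Omega_p)(Y_p) = \Omega_p(X_p,Y_p) = -\Omega_p(Y_p,X_p) = 0$. A rank-nullity count then shows $\dim(\operatorname{im}\hat\Omega_p) = \dim\Tan_p\W - \dim\ker\Omega_p = \dim(\ker\Omega_p)^{\circ}$, so the inclusion is in fact an equality $\operatorname{im}\hat\Omega_p = (\ker\Omega_p)^{\circ}$. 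Hence \eqref{eqn:DynEq} is solvable at $p$ if and only if $(\d H)_p$ annihilates $\ker\Omega_p$, that is, $(\inn(Y)\d H)(p) = 0$ for every $Y \in \ker\Omega$, which is exactly the condition defining $\W_c$.

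It then remains to verify that $\W_c$ is genuinely a submanifold. Using the local basis $\ker\Omega = \langle \partial/\partial q_k^A,\, \partial/\partial\xi^{k-1}\rangle$ of \eqref{eqn:KernelOmegaLocal} together with the coordinate expression of $H$, the defining conditions become the $m + \dim\g$ equations $p_A^{k-1} = \partial\Lag/\partial q_k^A$ and $\alpha_{k-1} = \partial\Lag/\partial\xi^{k-1}$. Since these solve for the fibre coordinates $p_A^{k-1}$ and $\alpha_{k-1}$ in terms of the remaining coordinates, they are functionally independent, so $\W_c$ is the graph of a map over the complementary coordinates and hence a closed embedded submanifold of $\W$ of codimension $m + \dim\g$.

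The computations involved are routine; the only genuine point is the linear-algebra lemma equating $\operatorname{im}\hat\Omega_p$ with the annihilator of $\ker\Omega_p$, and in particular the dimension count that upgrades the easy inclusion to an equality. Everything else is bookkeeping with the explicit expressions for $\Omega$ and $\ker\Omega$ recorded in \eqref{eqn:OmegaLocal} and \eqref{eqn:KernelOmegaLocal} and for $H$.
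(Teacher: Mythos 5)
Your proof is correct and follows essentially the same route as the paper: the paper's entire proof is an appeal to the Gotay--Nester--Hinds characterization (the equality of the two descriptions of $M_2$ stated without proof in Section \ref{sec:GNHAlgorithm}), which is precisely the pointwise linear-algebra fact you establish via the rank--nullity upgrade of the inclusion $\operatorname{im}\hat\Omega_p \subseteq (\ker\Omega_p)^{\circ}$ to an equality. Your closing verification that $\W_c$ is an embedded submanifold, obtained by solving for the fibre coordinates $p_A^{k-1}$ and $\alpha_{k-1}$, matches the coordinate computation the paper carries out immediately after the proposition.
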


In natural coordinates, since $\d H \in \df^{1}(\W)$ is locally given by
\begin{equation}\label{eqn:DiffHamiltonianLocal}
\begin{array}{rl}
\d H_{(q_i^A,q_k^A,p_A^i,g,\bfxi,\xi^{k-1},\bfalpha)}(Y) =
&  -\tilde{F}_0^A\derpar{\Lag}{q_0^A} + \tilde{F}_{i+1}^A\left( p_A^i - \derpar{\Lag}{q_{i+1}^A} \right) + \tilde{G}_A^iq_{i+1}^A \\
\ & + \left\langle - \pounds_g^* \derpar{\Lag}{g} , \xi_2^0 \right\rangle + \left\langle \alpha_i - \derpar{\Lag}{\xi^i} , \xi^{i+1}_2 \right\rangle
+ \langle \bfnu^2,\bfxi \rangle \, ,
\end{array}
\end{equation}
where $Y = (\tilde{F}_i^A,\tilde{F}_k^A,\tilde{G}_A^i,\bfxi_2,\xi_2^k,\bfnu^2) \in \vf(\W)$,
and $\ker\Omega$ has local basis \eqref{eqn:KernelOmegaLocal}, we have
$$
\inn(Y)\d H =
\begin{cases}
p_A^{k-1} - \derpar{\Lag}{q_k^A} \, , & \mbox{if } Y = \derpar{}{q_k^A} \, , \\[10pt]
\alpha_{k-1} - \derpar{\Lag}{\xi^{k-1}} \, , & \mbox{if } Y = \derpar{}{\xi^{k}} \, .
\end{cases}
$$
Therefore, $\W_c \hookrightarrow \W$ is locally defined by the constraints
$$
p_A^{k-1} - \derpar{\Lag}{q_k^A} = 0 \quad ; \quad
\alpha_{k-1} - \derpar{\Lag}{\xi^{k-1}} = 0 \, .
$$

Now, let us compute the local expression of equation \eqref{eqn:DynEq}.
Let $X \in \vf(\W)$ be a generic vector field locally given by
\begin{equation}\label{eqn:GenericVectorField}
X =
F_i^A\derpar{}{q_i^A} + F_k^A\derpar{}{q_k^A} + G_A^i\derpar{}{p_A^i}
+ \xi_1^0\derpar{}{g} + \xi_1^{i+1}\derpar{}{\xi^{i}} + \nu_i^1\derpar{}{\alpha_i}
= (F_i^A,F_k^A,G_A^i,\bfxi_1,\xi_1^{k},\bfnu^1) \, .
\end{equation}
Then, using \eqref{eqn:OmegaLocal} and
\eqref{eqn:DiffHamiltonianLocal}, we have the following system of
equations
\begin{align}
F_i^A = q_{i+1}^A \, , \label{eqn:DynEqLocalHolonomyManifold}\\
G_A^{0} = \derpar{\Lag}{q_0^A} \quad , \quad G_A^i = \derpar{\Lag}{q_i^A} - p_A^{i-1} \, , \label{eqn:DynEqLocalManifold} \\
p_A^{k-1} - \derpar{\Lag}{q_k^A} = 0 \, , \label{eqn:DynEqLocalHOMomentaManifold} \\
\xi_1^{i} = \xi^{i} \, , \label{eqn:DynEqLocalHolonomyGroup} \\
\nu_0^1 = \pounds_g^*\derpar{\Lag}{g} + ad_{\xi_1^0}^*\alpha_0 \quad , \quad \nu_{i+1}^1 = \derpar{\Lag}{\xi^i} - \alpha_i \, , \label{eqn:DynEqLocalGroup} \\
\alpha_{k-1} - \derpar{\Lag}{\xi^{k-1}} = 0 \, . \label{eqn:DynEqLocalHOMomentaGroup}
\end{align}

Therefore, the vector field $X$ solution to equation
\eqref{eqn:DynEq} is locally given by
\begin{align*}
X &= q_{i+1}^A\derpar{}{q_i^A} + F_k^A\derpar{}{q_k^A} + \derpar{\Lag}{q_0^A}\derpar{}{p_A^0} + \left( \derpar{\Lag}{q_i^A} - p_A^{i-1} \right) \derpar{}{p_A^i} \\
&\quad {}+ \xi^0\derpar{}{g} + \xi^{i+1}\derpar{}{\xi^{i}} + \xi_1^{k} \derpar{}{\xi^{k-1}} + \left( \pounds^*_g\derpar{\Lag}{g} + ad^*_{\xi^0}\alpha_0 \right) \derpar{}{\alpha_0}
+ \left( \derpar{\Lag}{\xi^i} - \alpha_i \right) \derpar{}{\alpha_{i+1}} \, .
\end{align*}
Observe that equations \eqref{eqn:DynEqLocalHOMomentaManifold} and
\eqref{eqn:DynEqLocalHOMomentaGroup} are compatibility conditions
that say that the vector field $X$ exists with support on a
submanifold defined locally by these equations. Hence, we recover in
coordinates the result stated in Proposition
\ref{prop:CompatibilitySubmanifold}.

The coefficients $F_k^A$ and $\xi^{k}_1$ are yet to be determined.
Nevertheless, recall that $X$ is a vector field in $\W$ that exists
at support on $\W_c$. Hence, we must study the tangency of $X$ along
the submanifold $\W_c$; that is, we must require
$\restric{\Lie(X)\zeta}{\W_c} \equiv \restric{X(\zeta)}{\W_c} = 0$ for every constraint function
$\zeta$ defining $\W_c$. Thus,
taking into account that $\W_c$ is locally defined by equations
\eqref{eqn:DynEqLocalHOMomentaManifold} and
\eqref{eqn:DynEqLocalHOMomentaGroup}, the tangency condition for $X$
along $\W_c$ gives the following equations
\begin{equation}\label{eqn:TangencyCondition}
\begin{array}{l}
\derpar{\Lag}{q_{k-1}^A} - p_A^{k-2} = q_{i+1}^B \derpars{\Lag}{q_i^B}{q_k^A} + F_k^B\derpars{\Lag}{q_k^B}{q_k^A} \\[12pt]
\hspace{85pt} + \, \xi^0\pounds_g^*\derpars{\Lag}{g}{q_k^A} + \xi^{i+1}\derpars{\Lag}{\xi^i}{q_k^A} + \xi_1^k\derpars{\Lag}{\xi^{k-1}}{q_k^A} \, , \\[15pt]
\derpar{\Lag}{\xi^{k-2}} - \alpha_{k-2} = q_{i+1}^B\derpars{\Lag}{q_i^B}{\xi^{k-1}} + F_k^B\derpars{\Lag}{q_k^B}{\xi^{k-1}} \\[12pt]
\hspace{85pt} \, + \xi^0\pounds_g^*\derpars{\Lag}{g}{\xi^{k-1}} + \xi^{i+1}\derpars{\Lag}{\xi^i}{\xi^{k-1}} + \xi^k_1 \derpars{\Lag}{\xi^{k-1}}{\xi^{k-1}} \, .
\end{array}
\end{equation}
These equations enable us to determinate the remaining coefficients
$F_k^A$ and $\xi_1^k$ of the vector field $X$. Observe that if the
Hessian matrix of $\Lag$ with respect to the highest-order
``velocities'', $q_k^A$ and $\xi^{k-1}$, is invertible, that is,
$$
\det
\begin{pmatrix}
\derpars{\Lag}{q_k^B}{q_k^A} & \derpars{\Lag}{q_k^A}{\xi^{k-1}} \\[12pt]
\derpars{\Lag}{\xi^{k-1}}{q_k^A} & \derpars{\Lag}{\xi^{k-1}}{\xi^{k-1}}
\end{pmatrix}
(p)\neq 0 \ , \quad \mbox{for every } p \in \Tan^{k}M \times G \times k\g \, ,
$$
then the previous system of equations has a unique solution for
$F_k^A$ and $\xi_1^k$, thus obtaining a unique vector field $X \in
\vf(\W)$ solution to the equation \eqref{eqn:DynEq}. In particular,
the constraint algorithm finishes at the first step. Otherwise, new
constraints may arise from equations \eqref{eqn:TangencyCondition},
and the algorithm continues if necessary.

\noindent\textbf{Remark:} In the particular case when the Hessian
matrix of the Lagrangian function is a block diagonal matrix, that
is,
$$
\derpars{\Lag}{q_i^A}{g} = 0 \quad \mbox{and} \quad \derpars{\Lag}{q_i^A}{\xi^j} = 0 \ , \
\mbox{for every } 1 \leqslant A \leqslant m \, , \, 0 \leqslant i \leqslant k \, , \, 0 \leqslant j \leqslant k-1 \, ,
$$
then equations \eqref{eqn:TangencyCondition} become
$$
\begin{array}{l}
\derpar{\Lag}{q_{k-1}^A} - p_A^{k-2} = q_{i+1}^B \derpars{\Lag}{q_i^B}{q_k^A} + F_k^B\derpars{\Lag}{q_k^B}{q_k^A} \, , \\[10pt]
\derpar{\Lag}{\xi^{k-2}} - \alpha_{k-2} = \xi^0\pounds_g^*\derpars{\Lag}{g}{\xi^{k-1}} + \xi^{i+1}\derpars{\Lag}{\xi^i}{\xi^{k-1}} + \xi^k_1 \derpars{\Lag}{\xi^{k-1}}{\xi^{k-1}}
\end{array}
$$
In this case, to solve the equation \eqref{eqn:DynEq} in $\W$ is
equivalent to solve separately the corresponding equations in $\W_M$
and $\W_G$ following the patterns in \cite{art:Prieto_Roman11} and
\cite{art:Colombo_Martin11}, respectively, and then take $X = X_M +
X_G$ as a solution of equation \eqref{eqn:DynEq}, where $X_M \in
\ker\pr_2$ is a vector field $\pr_1$-related with the vector field
solution to the equation in $\W_M$ and $X_G \in \ker\pr_1$ is a
vector field $\pr_2$-related with the solution of the equation in
$\W_G$.

Now, let $\gamma \colon \R \to \W$ be an integral curve of $X$
locally given by
\begin{equation}\label{eqn:GenericCurve}
\gamma(t) = (q_i^A(t),q_k^A(t),p_A^i(t),g(t),\xi^i(t),\alpha_{i}(t)) \, .
\end{equation}
From the condition $X \circ \gamma = \dot{\gamma}$ we obtain the
following system of differential equations for the component functions of
$\gamma$
\begin{align}
\dot{q}_i^A = q_{i+1}^A \, , \\
\dot{p}_A^0 = \derpar{\Lag}{q_0^A} \quad , \quad \dot{p}_A^i = \derpar{\Lag}{q_i^A} - p_A^{i-1} \, ,\label{eqn:DynEqCurveLocalManifold}\\
\dot{g} = g\xi^0 \, \quad , \quad \dot{\xi}^{i-1} = \xi^{i} \, , \\
\dot{\alpha}_0 = \pounds_g^*\derpar{\Lag}{g} + ad_{\xi_1^0}^*\alpha_0 \quad , \quad \dot{\alpha}_{i+1} = \derpar{\Lag}{\xi^i} - \alpha_i \label{eqn:DynEqCurveLocalGroup} \, ,
\end{align}
in addition to equations \eqref{eqn:DynEqLocalHOMomentaManifold} and
\eqref{eqn:DynEqLocalHOMomentaGroup}. Now, using equations
\eqref{eqn:DynEqLocalHOMomentaManifold} in combination with
equations \eqref{eqn:DynEqCurveLocalManifold} we obtain the
$k$th-order Euler-Lagrange equations
\begin{equation}\label{eqn:EulerLagrange}
\sum_{i=0}^{k}(-1)^{i}\restric{\frac{d^i}{dt^i}\derpar{\Lag}{q_i^A}}{\gamma} = 0 \, .
\end{equation}
On the other hand, using equations
\eqref{eqn:DynEqLocalHOMomentaGroup} in combination with equations
\eqref{eqn:DynEqCurveLocalGroup} we obtain the $k$th-order
trivialized Euler-Lagrange equations
\begin{equation}\label{eqn:TrivializedEulerLagrange}
\left( \frac{d}{dt} - ad_{\xi^0}^* \right) \sum_{i=0}^{k-1}(-1)^{i} \restric{\frac{d}{dt^i}\derpar{\Lag}{\xi^{i}}}{\gamma} = \pounds_g^*\restric{\left( \derpar{\Lag}{g} \right)}{\gamma} \, .
\end{equation}
Therefore, a dynamical trajectory $\gamma \colon \R \to \W$ of the
system must satisfy the following local equations
$$
\sum_{i=0}^{k}(-1)^{i}\restric{\frac{d^i}{dt^i}\derpar{\Lag}{q_i^A}}{\gamma} = 0 \quad , \quad
\left( \frac{d}{dt} - ad_{\xi^0}^* \right) \sum_{i=0}^{k-1}(-1)^{i} \restric{\frac{d}{dt^i}\derpar{\Lag}{\xi^{i}}}{\gamma} = \pounds_g^*\restric{\left( \derpar{\Lag}{g} \right)}{\gamma} \, .
$$

\noindent\textbf{Remark:} The above equations may be compatible or
not. If not, a constraint algorithm must be used in order to find a
final submanifold where the above equations have a solution
(if such a submanifold exists).

Finally, if the Lagrangian function $\Lag \in \Cinfty(\Tan^{k}M
\times G \times k\g)$ is left-invariant, that is,
$$
\Lag(q_i^A,q_k^A,g,\xi^{i}) = \Lag(q_i^A,q_k^A,h,\xi^{i}) \, ,
$$
for all $g,h \in G$, then we can define the reduced Lagrangian $\ell
\in \Cinfty(\Tan^kM \times k\g)$ by
$$
\ell(q_i^A,q_k^A,\xi^{i}) = \Lag(q_i^A,q_k^A,e,\xi^{i}) \, ,
$$
and therefore equations \eqref{eqn:TrivializedEulerLagrange} become
the $k$th order Euler-Poincar\'{e} equations
\begin{equation}\label{eqn:EulerPoincare}
\left( \frac{d}{dt} - ad_{\xi^0}^* \right) \sum_{i=0}^{k-1}(-1)^{i} \restric{\frac{d}{dt^i}\derpar{\ell}{\xi^{i}}}{\gamma} = 0 \, .
\end{equation}
Observe that equations \eqref{eqn:EulerLagrange} remain the same
with the reduced Lagrangian function, just replacing $\Lag$ by
$\ell$.

Equations \eqref{eqn:EulerLagrange} and
\eqref{eqn:TrivializedEulerLagrange} are exactly the same
that one of the authors derived in
\cite{art:Colombo_Jimenez_Martin12} using variational techniques.
Our derivation allows us to identify staightforwardly the geometric
preservation of the system, for instance, preservation of the
Hamiltonian or (pre)symplecticity of the flow.

%%%%%%%%%%%%%%%%%%%%%%%%%%%%%%%%%%%%%%%%%%%%%%%%%%
\subsubsection{A theoretical example}
%%%%%%%%%%%%%%%%%%%%%%%%%%%%%%%%%%%%%%%%%%%%%%%%%%

Now, we give a theoretical example inspired by the applications
in Clebsch variational principle and continuum mechanics studied in
\cite{art:GayBalmaz_Holm_Ratiu13,book:Holm08}.

Let us consider the particular case when the manifold $M$ is the
dual of a real vector space, that is, $M = V^*$, where $V$ is a
finite dimensional real vector space. In this case we have the
following identifications
$$
\begin{array}{c}
\Tan^kV^* \simeq (k+1)V^* \quad , \quad
\Tan^*V^* \simeq V^*\times V \\[5pt]
\Tan^*(\Tan^{k-1}V^*) \simeq k(V^*\times V)
\end{array}
$$
Using these identifications, we have
$$
\W_{V^*} = \Tan^kV^* \times_{\Tan^{k-1}V^*} \Tan^*(\Tan^{k-1}V^*)
\simeq (k+1)V^* \times kV
$$
Since $V$ and $V^*$ have global charts of coordinates defined by any basis, we will denote an element
of $(k+1)V^*$ by $(\bfmu,\mu_k) \equiv (\mu_i,\mu_k) \equiv (\mu_0,\ldots,\mu_k)$, where $\mu_j \in V^*$
for every $0 \leqslant j \leqslant k$, and an element of $kV$ will be denoted by
$(\bfv) \equiv (v^0,\ldots,v^{k-1})$, where $v^j \in V$ for every $0 \leqslant j \leqslant k-1$.
Then, an element of $\W_{V^*}$ will be denoted $(\bfmu,\mu_k,\bfv) \equiv (\mu_0,\ldots,\mu_{k-1},\mu_k,v^0,\ldots,v^{k-1})$.

The canonical projections $\widetilde{\pr}_1 \colon (k+1)V^*\times
kV \to (k+1)V^*$ and $\widetilde{\pr}_2 \colon (k+1)V^* \times kV
\to kV^* \times kV$ are given by
$$
\widetilde{\pr}_1(\bfmu,\mu_k,\bfv) = (\bfmu,\mu_k) \quad ; \quad \widetilde{\pr}_2(\bfmu,\mu_k,\bfv) = (\bfmu,\bfv) \, .
$$

Let $\omega_{kV^*} \in \df^{2}(k(V^*\times V))$ be the canonical
symplectic form in $k(V^*\times V)$, which is given by
$$
(\omega_{kV^*})_{(\bfmu,\bfv)}((\beta_i^1,u_1^i),(\beta_i^2,u_2^i)) =
\sum_{i=0}^{k-1} ( \langle u_2^i,\beta^1_i \rangle_{V^*} - \langle u_1^i,\beta^2_i \rangle_{V^*})
= \sum_{i=0}^{k-1} (\beta_i^1(u^i_2) - \beta_i^2(u_1^i)) \, .
$$
where $\langle \cdot , \cdot \rangle_{V^*}$ is the canonical pairing
between the elements of $V^*$ and its dual $V^{**} \simeq V$. We
define the presymplectic form in $\W_{V^*}$ as $\Omega_{V^*} =
\widetilde{\pr}_2^*\omega_{kV^*} \in \df^{2}(\W_{V^*})$. This
$2$-form is given locally by
$$
(\Omega_{V^*})_{(\bfmu,\mu_k,\bfv)}((\beta_i^1,\beta_k^1,u_1^i),(\beta_i^2,\beta_k^2,u_2^i)) =
\sum_{i=0}^{k-1} (\beta_i^1(u^i_2) - \beta_i^2(u_1^i)) \, .
$$

Now, we define the canonical pairing in $\W_{V^* } \simeq (k+1)V^*
\times kV$ as a function $\C_{V^*} \in \Cinfty(\W_{V^*})$ as
follows:
$$
\begin{array}{rcl}
\C_{V^*} \colon (k+1)V^* \times kV & \longrightarrow & \R \\
(\bfmu,\mu_k,\bfv) & \longmapsto & \langle v^i, \mu_{i+1} \rangle_{V^*} = \mu_{i+1}(v^i)
\end{array}
\, .
$$

With these elements, we can follow the patterns in Section
\ref{sec:UnconstrainedProblem}. Let us consider the manifold $Q =
V^*\times G.$ Then we consider the Pontryagin bundle
$$
\W \simeq (k+1)V^* \times kV \times G \times k\g \times k\g^* \, .
$$
Then, the diagram in Section \ref{sec:UnconstrainedProblem} becomes
{\footnotesize
$$
\xymatrix{
\ & \ & \ & \W \ar[dll]_{\pr_1} \ar[drr]^{\pr_2} & \ & \ & \ \\
\ & (k+1)V^* \times kV \ar[dl]_{\widetilde{\pr}_1} \ar[dr]^{\widetilde{\pr}_2} & \ & \ & \ & G \times k\g \times k\g^* \ar[dl]_{\overline{\pr}_1} \ar[dr]^{\overline{\pr}_2} & \ \\
(k+1)V^* \ar[dr]_{\rho^{k}_{k-1}} &  & kV^* \times kV \ar[dl]^{\pi_{\Tan^{k-1}V^*}} & \ & G \times k\g \ar[dr]_{\tau^{k}_{k-1}} & \ & G \times (k-1)\g \times k\g^* \ar[dl]^{\pi_{\Tan^{k-1}G}} \\
\ & kV^* & \ & \ & \  & G \times (k-1)\g & \
}
$$
}

Let $(\bfmu,\mu_k,\bfv)$ be local coordinates in $\W_{V^*}$ and
$(g,\bfxi,\xi^k,\bfalpha)$ are local coordinates in $\W_G$. Then,
the induced local coordinates in $\W$ are
$(\bfmu,\mu_k,\bfv,g,\bfxi,\xi^k,\bfalpha)$. The canonical
projections $\pr_1 \colon (k+1)V^* \times
kV \times G \times k\g \times k\g^* \to (k+1)V^* \times kV$ and
$\pr_2 \colon (k+1)V^* \times kV \times G \times k\g \times k\g^*
\to G \times k\g \times k\g^*$ are given in these coordinates by
$$
\pr_1(\bfmu,\mu_k,\bfv,g,\bfxi,\xi^k,\bfalpha) = (\bfmu,\mu_k,\bfv) \quad ; \quad
\pr_2(\bfmu,\mu_k,\bfv,g,\bfxi,\xi^k,\bfalpha) = (g,\bfxi,\xi^k,\bfalpha).
$$

The presymplectic form $\Omega \in \df^{2}(\W)$ defined in \eqref{eqn:PresymplecticFormDef}
is now given by
\begin{equation*}%\label{eqn:OmegaLocal}
\Omega_{(\bfmu,\mu_k,\bfv,g,\bfxi,\xi^{k-1},\bfalpha)}\left( X_1, X_2 \right)
= \sum_{i=0}^{k-1} \left(\beta_i^1(u^i_2) - \beta_i^2(u_1^i) + \langle \nu_i^2,\xi_1^i \rangle - \langle \nu_i^1,\xi_2^i \rangle \right) + \langle \alpha_0, [\xi_1^0,\xi_2^0] \rangle \, ,
\end{equation*}
where $X_1 = (\beta_i^1,\beta_k^1,u_1^i,\bfxi_1,\xi_1^{k},\bfnu^1),
X_2 = (\beta_i^2,\beta_k^2,u_2^i,\bfxi_2,\xi_2^{k},\bfnu^2) \in
\vf(\W)$. The coupling function $\C \in \Cinfty(\W)$ is
$$
\C(\bfmu,\mu_k,\bfv,g,\bfxi,\xi^{k-1},\bfalpha) = \sum_{i=0}^{k-1} \left( \mu_{i+1}(v^i) + \langle \alpha_i, \xi^i \rangle \right)  \, .
$$

Let $\Lag \in \Cinfty((k+1)V^* \times G \times k\g)$ be a
$k$th-order Lagrangian function. We define the Hamiltonian function
$H \in \Cinfty(\W)$ by
$$
H(\bfmu,\mu_k,\bfv,g,\bfxi,\xi^{k-1},\bfalpha) = \sum_{i=0}^{k-1} \left( \mu_{i+1}(v^i) + \langle \alpha_i, \xi^i \rangle \right) - \Lag(\bfmu,\mu_k,g,\bfxi,\xi^{k-1}) \, .
$$

Now, if $X \in \vf(\W)$ is a vector field, locally given by
$$
X = \beta_i\derpar{}{\mu_i} + \beta_k\derpar{}{\mu_k} + u^i\derpar{}{v^i}
+ \xi_1^0\derpar{}{g} + \xi_1^{i+1}\derpar{}{\xi^{i}} + \nu_i^1\derpar{}{\alpha_i}
= (\beta_i,\beta_k,u^i,\xi_1^i,\xi_1^{k},\nu_i^1) \, ,
$$
then the dynamical equation
$$
\inn(X)\Omega = \d H \, ,
$$
give rise to the following system of equations
\begin{align*}
\beta_i = \mu_{i+1} \, ,\\
u^0 = \derpar{\Lag}{\mu_0}, \quad  \quad u^i = \derpar{\Lag}{\mu_i} - v^{i-1} \, , \\
v^{k-1} - \derpar{\Lag}{\mu_k} = 0 \, , \\
\xi_1^{i} = \xi^{i} \, , \\
\nu_0^1 = \pounds_g^*\derpar{\Lag}{g} + ad_{\xi_1^0}^*\alpha_0, \quad  \quad \nu_{i+1}^1 = \derpar{\Lag}{\xi^i} - \alpha_i \, ,\\
\alpha_{k-1} - \derpar{\Lag}{\xi^{k-1}} = 0 \, .
\end{align*}
Therefore, the vector field $X$ solution to the dynamical equation is locally given by
\begin{align*}
X &= \mu_{i+1}\derpar{}{\mu_i} + \beta_k\derpar{}{\mu_k} + \derpar{\Lag}{\mu_0}\derpar{}{v^0} + \left( \derpar{\Lag}{\mu_i} - v^{i-1} \right) \derpar{}{v_A^i} \\
&\quad {}+ \xi^0\derpar{}{g} + \xi^{i+1}\derpar{}{\xi^{i}} + \xi_1^{k} \derpar{}{\xi^{k-1}} + \left( L^*_g\derpar{\Lag}{g} + ad_{\xi^0}^*\alpha_0 \right) \derpar{}{\alpha_0}
+ \left( \derpar{\Lag}{\xi^i} - \alpha_i \right) \derpar{}{\alpha_{i+1}}
\end{align*}

Finally, the tangency condition along the submanifold $\W_c$ defined locally by the constraints
\begin{equation}\label{primary}
v^{k-1} - \derpar{\Lag}{\mu_k} = 0 \quad ; \quad
\alpha_{k-1} - \derpar{\Lag}{\xi^{k-1}} = 0 \, ,
\end{equation}
gives the following system of equations for the remaining coefficients $\beta_k$ and $\xi_1^k$
\begin{equation*}
\begin{array}{l}
\derpar{\Lag}{\mu_{k-1}} - v^{k-2} = \mu_{i+1} \derpars{\Lag}{\mu_i}{\mu_k} + \beta_k\derpars{\Lag}{\mu_k}{\mu_k}
+ \xi^0\pounds_g^*\derpars{\Lag}{g}{\mu_k} + \xi^{i+1}\derpars{\Lag}{\xi^i}{\mu_k} + \xi_1^k\derpars{\Lag}{\xi^{k-1}}{\mu_k} \, , \\[10pt]
\derpar{\Lag}{\xi^{k-2}} - \alpha_{k-2} = \mu_{i+1}\derpars{\Lag}{\mu_i}{\xi^{k-1}} + \mu_k\derpars{\Lag}{\mu_k}{\xi^{k-1}}
+ \xi^0\pounds_g^*\derpars{\Lag}{g}{\xi^{k-1}} + \xi^{i+1}\derpars{\Lag}{\xi^i}{\xi^{k-1}} + \xi^k_1 \derpars{\Lag}{\xi^{k-1}}{\xi^{k-1}} \,     .
\end{array}
\end{equation*}

If the matrix
$$
\begin{pmatrix}
\derpars{\Lag}{\mu_k}{\mu_k} & \derpars{\Lag}{\mu_k}{\xi^{k-1}} \\[12pt]
\derpars{\Lag}{\xi^{k-1}}{\mu_k} & \derpars{\Lag}{\xi^{k-1}}{\xi^{k-1}}
\end{pmatrix} \, ,
$$
is regular for every point $p \in (k+1)V^* \times G \times k\g$,
then by a direct computation
%using the implicit function theorem in
%equation \eqref{primary}
the previous equations have a unique solution for
$\mu_{k-1}$ and $\xi^{k}_1$.

%%%%%%%%%%%%%%%%%%%%%%%%%%%%%%%%%%%%%%%%%%%%%%%%%%
\subsection{Constrained problem}
\label{sec:ConstrainedProblem}
%%%%%%%%%%%%%%%%%%%%%%%%%%%%%%%%%%%%%%%%%%%%%%%%%%

%%%%%%%%%%%%%%%%%%%%%%%%%%%%%%%%%%%%%%%%%%%%%%%%%%
\subsubsection{Geometrical setting}
%%%%%%%%%%%%%%%%%%%%%%%%%%%%%%%%%%%%%%%%%%%%%%%%%%

As in Section \ref{sec:UnconstrainedProblem}, let $Q$ be a finite
dimensional smooth manifold modeling the configuration space of a
$k$th-order dynamical system. Now we assume that the dynamics of
the system are constrained. Geometrically, the Lagrangian function
containing the dynamical information of the system is defined at
support on a submanifold of $\Tan^kQ$. Let $j_\N \colon \N
\hookrightarrow \Tan^kQ$ be the constraint submanifold, with
$\codim\N = n$, and $\Lag_\N \in \Cinfty(\N)$ the Lagrangian function
describing the dynamics of the constrained dynamical system.

Let us consider the submanifold
$\overline{\W} = \N \times_{\Tan^{k-1}Q} \Tan^*(\Tan^{k-1}Q)$ of
$\Tan^{k}Q \times_{\Tan^{k-1}Q} \Tan^*(\Tan^{k-1}Q)$ with canonical
embedding $i_{\overline{\W}} \colon \overline{\W} \hookrightarrow \Tan^{k}Q \times_{\Tan^{k-1}Q} \Tan^*(\Tan^{k-1}Q)$
and natural projection $\pr_\N \colon \overline{\W} \to \N$.
If we take $Q = M \times G$, where $M$ is a $m$-dimensional smooth manifold
and $G$ a finite dimensional Lie group, then we have
$\overline{\W} = \N \times_{\Tan^{k-1}M} \Tan^*(\Tan^{k-1}M) \times k\g^*$.

Now, using the results given in Section
\ref{sec:UnconstrainedProblem}, we can define a closed $2$-form in
$\overline{\W}$ as $\overline{\Omega} = i_{\overline{\W}}^*\Omega
\in \df^{2}(\overline{\W})$, where $\Omega \in \df^{2}(\W)$ is the
presymplectic form defined in \eqref{eqn:PresymplecticFormDef}, and
a Hamiltonian function $\overline{H} = i_{\overline{\W}}^* \, \C -
\pr_\N^*\Lag_\N \in \Cinfty(\overline{\W})$, where $\C \in
\Cinfty(\W)$ is the coupling function defined in
\eqref{eqn:CouplingFunctionDef}. With these elements we can state
the dynamical equation for the constrained problem, which is
\begin{equation}\label{eqn:DynEqConstrainedSubmanifold}
\inn(X)\overline{\Omega} = \d\overline{H} \, .
\end{equation}
Since $\N \hookrightarrow \Tan^k(M \times G)$ is an arbitrary
$n$-codimensional submanifold, we do not have a natural set of
coordinates in $\overline{\W}$, and therefore the local study of the
equation can not be done in a general setting. For this reason, we
adopt an ``extrinsic point of view'', that is, we will work in the
bundle $\W$, and then require the solutions to lie in the
submanifold $\overline{\W} \hookrightarrow \W$.

In order to do this, we must construct a Hamiltonian function $H \in \Cinfty(\W)$ using the
Lagrangian function $\Lag_\N \in \Cinfty(\N)$ containing the dynamical information of the system.
Hence, let $\Lag \in \Cinfty(\Tan^k(M \times G))$ be an arbitrary extension of $\Lag_\N$, and let
$H$ be the Hamiltonian function defined in \eqref{eqn:HamiltonianFunctionDef} using this arbitrary
extension of the Lagrangian function $\Lag_\N$.

%%%%%%%%%%%%%%%%%%%%%%%%%%%%%%%%%%%%%%%%%%%%%%%%%%
\subsubsection{Dynamical equation}
%%%%%%%%%%%%%%%%%%%%%%%%%%%%%%%%%%%%%%%%%%%%%%%%%%

The extrinsic dynamical equation for a constrained dynamical system is
\begin{equation}\label{eqn:DynEqConstrainedExtrinsic}
\inn(X)\Omega - \d H \in \operatorname{ann}(\Tan\overline{\W}) \ , \quad
\mbox{for } X \in \vf(\W) \mbox{ tangent to } \overline{\W} \, .
\end{equation}
where $\operatorname{ann}(D)$ denotes the annihilator of a distribution
$D \subset \Tan\W$. Observe that this equation is clearly equivalent to \eqref{eqn:DynEqConstrainedSubmanifold}.

Then, following \cite{art:Gotay_Nester_Hinds78} we have
\begin{proposition}\label{prop:CompatibilitySubmanifoldConstrained}
A solution to the equation \eqref{eqn:DynEqConstrainedExtrinsic} exists only on the points of the
submanifold $\W_c \hookrightarrow \W$ defined by
$$
\W_c = \left\{ p \in \W \colon (\inn(Y)\d H)(p) \in \operatorname{ann}(\Tan_p\overline{\W})
\, , \ \forall \, Y \in \ker\Omega \right\}.
$$
\end{proposition}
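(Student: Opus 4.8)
The plan is to derive the constraint submanifold $\W_c$ by characterizing exactly those points of $\W$ at which the extrinsic equation \eqref{eqn:DynEqConstrainedExtrinsic} can be solved, following the Gotay--Nester--Hinds analysis exactly as in Proposition \ref{prop:CompatibilitySubmanifold} but now taking into account the tangency requirement $X \in \Tan\overline{\W}$. The starting point is to recall that equation \eqref{eqn:DynEqConstrainedExtrinsic} asks for a vector field $X$, tangent to $\overline{\W}$, such that $\inn(X)\Omega - \d H$ annihilates $\Tan\overline{\W}$. Since $\Omega$ is presymplectic, the obstruction to solvability is governed entirely by its kernel, whose local generators are given in \eqref{eqn:KernelOmegaLocal}.

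First I would pair the equation \eqref{eqn:DynEqConstrainedExtrinsic} against an arbitrary element $Y \in \ker\Omega$. For such $Y$ we have $\inn(Y)\Omega = 0$, and hence $\inn(Y)(\inn(X)\Omega) = -\inn(X)(\inn(Y)\Omega) = 0$ by antisymmetry of $\Omega$. Contracting the membership relation $\inn(X)\Omega - \d H \in \operatorname{ann}(\Tan\overline{\W})$ with a $Y \in \ker\Omega$ that is moreover tangent to $\overline{\W}$ then forces $(\inn(Y)\d H)(p) \in \operatorname{ann}(\Tan_p\overline{\W})$, which is precisely the defining condition of $\W_c$. The key point to make explicit is that the $X$-dependent term drops out: because $Y$ lies in $\ker\Omega$, the quantity $\inn(X)\Omega$ contributes nothing when tested against $Y$, so the condition becomes purely a statement about $\d H$ and $\ker\Omega$, independent of the unknown $X$.

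Next I would argue the converse, namely that on $\W_c$ the equation genuinely admits a (not necessarily unique) solution $X$ tangent to $\overline{\W}$. This is the standard linear-algebra step of the constraint algorithm: at each point of $\W_c$ the necessary condition extracted above is also sufficient for the inhomogeneous linear system $\inn(X)\Omega = \d H \pmod{\operatorname{ann}(\Tan\overline{\W})}$ to be consistent, by the usual duality between the image of the map $X \mapsto \inn(X)\Omega$ restricted to $\Tan\overline{\W}$ and the annihilator of $\ker\Omega \cap \Tan\overline{\W}$. The statement as given only asserts the necessity direction (``a solution exists only on the points of $\W_c$''), so strictly speaking the first two paragraphs suffice; the converse is what justifies calling $\W_c$ the \emph{compatibility} submanifold and would be included for completeness.

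The main obstacle I anticipate is bookkeeping about \emph{which} elements of $\ker\Omega$ must be used. In the constrained setting one tests \eqref{eqn:DynEqConstrainedExtrinsic} not against all of $\ker\Omega$ but against $\ker\Omega$ together with the tangency constraint $X \in \Tan\overline{\W}$, so care is needed to confirm that the relevant obstruction space is still generated by $\ker\Omega$ as the definition of $\W_c$ states, rather than by $\ker\Omega \cap \Tan\overline{\W}$ or some larger combination involving the conormal directions of $\overline{\W}$. I would resolve this by invoking the observation already recorded in the excerpt, that equation \eqref{eqn:DynEqConstrainedExtrinsic} is equivalent to the intrinsic equation \eqref{eqn:DynEqConstrainedSubmanifold} on $\overline{\W}$, and then applying the unconstrained compatibility result (Proposition \ref{prop:CompatibilitySubmanifold}, i.e.\ \cite{art:Gotay_Nester_Hinds78}) to the presymplectic system $(\overline{\W},\overline{\Omega},\d\overline{H})$, transporting the resulting condition back to $\W$ via the embedding $i_{\overline{\W}}$.
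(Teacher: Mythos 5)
Your central idea in the second paragraph --- contract the equation with elements of $\ker\Omega$ so that the unknown $X$ drops out by antisymmetry --- is the right one, and it is essentially how the paper recovers the proposition in coordinates (writing the equation as $\inn(X)\Omega - \d H = \lambda_a\d\Phi^a$ and contracting with the basis \eqref{eqn:KernelOmegaLocal}); the paper itself gives no intrinsic proof, merely citing Gotay--Nester--Hinds. However, your execution has a genuine gap: you contract only with those $Y \in \ker\Omega$ that are \emph{tangent} to $\overline{\W}$, which yields $(\inn(Y)\d H)(p) = 0$ for $Y \in \ker\Omega_p \cap \Tan_p\overline{\W}$ only, whereas the defining condition of $\W_c$ quantifies over \emph{all} of $\ker\Omega$. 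The tangency restriction on $Y$ is unnecessary, and it is exactly what creates your difficulty: setting $\beta := \inn(X)\Omega - \d H \in \operatorname{ann}(\Tan\overline{\W})$, contraction with an \emph{arbitrary} $Y \in \ker\Omega_p$ gives $(\inn(Y)\d H)(p) = -(\inn(Y)\beta)(p)$ with the single covector $-\beta_p \in \operatorname{ann}(\Tan_p\overline{\W})$ independent of $Y$; this says that $\d H(p)$ coincides on $\ker\Omega_p$ with an element of $\operatorname{ann}(\Tan_p\overline{\W})$, which is how the proposition's condition must be read (locally: $p_A^{k-1} - \derpar{\Lag}{q_k^A} + \lambda_a\derpar{\Phi^a}{q_k^A} = 0$, and similarly for $\xi^{k-1}$). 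With that single change, your second paragraph alone proves the stated necessity claim.

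Your third and fourth paragraphs do not repair the gap, because they conflate three different spaces: $\ker\Omega$, $\ker\Omega \cap \Tan\overline{\W}$, and $\ker\overline{\Omega} = \ker\bigl(i_{\overline{\W}}^{*}\Omega\bigr) = \Tan\overline{\W} \cap (\Tan\overline{\W})^{\perp_\Omega}$. The duality you invoke is incorrect as stated: the image of the map $\Tan_p\overline{\W} \ni v \mapsto \restric{\inn(v)\Omega_p}{\Tan_p\overline{\W}}$ is the annihilator of $\ker\overline{\Omega}_p$, not of $\ker\Omega_p \cap \Tan_p\overline{\W}$, and since the former is in general strictly larger, your converse fails: at a point of $\W_c$ equation \eqref{eqn:DynEqConstrainedExtrinsic} is solvable with $X \in \vf(\W)$ unrestricted, but a solution tangent to $\overline{\W}$ need not exist there --- which is precisely why the paper must continue the constraint algorithm (the tangency conditions \eqref{eqn:TangencyConditionConstrained}, which determine $F_k^A$, $\xi_1^k$, $\lambda_a$ or generate new constraints). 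For the same reason, applying Proposition \ref{prop:CompatibilitySubmanifold} to $(\overline{\W},\overline{\Omega},\overline{H})$ and ``transporting back'', as your last paragraph proposes, yields the set where $\d\overline{H}$ annihilates $\ker\overline{\Omega}$; this is in general a \emph{proper} subset of the $\W_c$ of the statement, so it proves a different (stronger) compatibility condition, and deducing the proposition from it would still require the elementary contraction argument sketched above.
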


In natural coordinates, let $\Phi^a \in \Cinfty(\Tan^k(M \times G))$, $1 \leqslant a \leqslant n$, be
local functions defining the submanifold $\N \hookrightarrow \Tan^k(M \times G)$, that is,
$$
\N = \left\{ p \in \Tan^k(M \times G) \colon \Phi^a(p) = 0 \, , \, 1 \leqslant a \leqslant n \right\}.
$$
In an abuse of notation, we also denote by $\Phi^a$ the pull-back
of the constraint functions to $\W$.
Then, the annihilator of $\Tan\overline{\W}$ is locally given by
$$
\operatorname{ann}(\Tan\overline{\W}) = \left\langle \d\Phi^a \right\rangle \, .
$$
Therefore, the equation defining the submanifold $\W_c$ may be written locally as
$$
\inn(Y)\d H = \lambda_a\d\Phi^a \, , \ \forall \, Y \in \ker\Omega \, ,
$$
where $\lambda_a$, $1 \leqslant a \leqslant n$ are the Lagrange
multipliers. Then, bearing in mind the local expression
\eqref{eqn:DiffHamiltonianLocal} of $\d H$ and
\eqref{eqn:KernelOmegaLocal} of $\ker\Omega$, the equations defining
locally the submanifold $\W_c$ are
$$
p_A^i - \derpar{L}{q_k^A} + \lambda_a\derpar{\Phi^a}{q_k^A} = 0 \quad ; \quad
\alpha_{k-1} - \derpar{L}{\xi^{k-1}} + \lambda_a\derpar{\Phi^a}{\xi^{k-1}} = 0 \quad ; \quad
\Phi^a = 0 \, .
$$

Now, let us compute the local expression of equation \eqref{eqn:DynEqConstrainedExtrinsic}.
If we assume that $\N$ is determined by the vanishing of the $n$ functions $\Phi^a$, then equation
\eqref{eqn:DynEqConstrainedExtrinsic} may be rewritten as
$$
\inn(X)\Omega - \d H = \lambda_a\d\Phi^a \, ,
$$
where $\lambda_a$ are Lagrange multipliers to be determined. Then, bearing in mind the local
expression \eqref{eqn:DiffHamiltonianLocal} of $\d H$ and \eqref{eqn:OmegaLocal} of $\Omega$,
taking a generic vector field locally given by \eqref{eqn:GenericVectorField} we obtain the following
system of equations
\begin{align}
F_i^A = q_{i+1}^A ,\,  \\
G_A^{0} = \derpar{\Lag}{q_0^A} - \lambda_a\derpar{\Phi^a}{q_0^A} \quad , \quad
G_A^i = \derpar{\Lag}{q_i^A} - \lambda_a\derpar{\Phi^a}{q_i^A} - p_A^{i-1} \, , \\
p_A^{k-1} - \derpar{\Lag}{q_k^A} + \lambda_a\derpar{\Phi^a}{q_k^A} = 0 \, , \label{eqn:DynEqLocalHOMomentaManifoldConstrained} \\
\xi_1^{i} = \xi^{i} \, , \\
\nu_0^1 = \pounds_g^*\left( \derpar{\Lag}{g} - \lambda_a\derpar{\Phi^a}{g} \right) + ad_{\xi_1^0}^*\alpha_0 \quad , \quad
\nu_{i+1}^1 = \derpar{\Lag}{\xi^i} - \lambda_a\derpar{\Phi^a}{\xi^i} - \alpha_i \, , \\
\alpha_{k-1} - \derpar{\Lag}{\xi^{k-1}} + \lambda_a\derpar{\Phi^a}{\xi^{k-1}} = 0 \label{eqn:DynEqLocalHOMomentaGroupConstrained} \, , \\
\Phi^a(q_i^A,q_k^A,g,\bfxi,\xi^{k-1}) = 0 \, . \label{eqn:LocalConstraintsSubmanifold}
\end{align}
Therefore, the vector field $X$ solution to equation \eqref{eqn:DynEqConstrainedExtrinsic}
is locally given by
\begin{align*}
X &= q_{i+1}^A\derpar{}{q_i^A} + F_k^A\derpar{}{q_k^A}
+ \left(\derpar{\Lag}{q_0^A} - \lambda_a\derpar{\Phi^a}{q_0^A} \right)\derpar{}{p_A^0}
+ \left( \derpar{\Lag}{q_i^A} - \lambda_a\derpar{\Phi^a}{q_i^A} - p_A^{i-1} \right) \derpar{}{p_A^i} \\
&\quad {}+ \xi^0\derpar{}{g} + \xi^{i+1}\derpar{}{\xi^{i}} + \xi_1^{k} \derpar{}{\xi^{k-1}}
+ \left( \pounds_g^*\left( \derpar{\Lag}{g} - \lambda_a\derpar{\Phi^a}{g} \right) + ad_{\xi^0}^*\alpha_0 \right) \derpar{}{\alpha_0} \\
&\quad {}+ \left( \derpar{\Lag}{\xi^i} - \lambda_a\derpar{\Phi^a}{\xi^i} - \alpha_i \right) \derpar{}{\alpha_{i+1}}
\end{align*}
Observe that equations \eqref{eqn:DynEqLocalHOMomentaManifoldConstrained},
\eqref{eqn:DynEqLocalHOMomentaGroupConstrained} and \eqref{eqn:LocalConstraintsSubmanifold}
do not involve coefficient functions of the vector field $X$: they are pointwise algebraic relations, stating
that the vector field $X$ exists with support on a submanifold defined locally by these equations.
Hence, we recover locally the result stated in Proposition \ref{prop:CompatibilitySubmanifoldConstrained}.

The coefficients $F_k^A$ and $\xi_1^k$ of the vector field and the Lagrange multipliers $\lambda_a$
remain undetermined. Nevertheless, from Proposition \ref{prop:CompatibilitySubmanifoldConstrained}
we know that the vector field $X$ exists only at support on the submanifold $\W_c$. Hence, we must require
the vector field $X$ to be tangent to $\W_c$, that is, we must impose $\restric{\Lie(X)\zeta}{\W_c} = 0$ for
every constraint function $\zeta$ defining $\W_c$. Then, taking into account that $\W_c$ is locally defined by
equations \eqref{eqn:DynEqLocalHOMomentaManifoldConstrained},
\eqref{eqn:DynEqLocalHOMomentaGroupConstrained} and \eqref{eqn:LocalConstraintsSubmanifold},
the tangency condition for $X$ along $\W_c$ gives the following equations
\begin{align}
\derpar{\Lag}{q_{k-1}^A} - p_A^{k-2}
&= q_{i+1}^B \left( \derpars{\Lag}{q_i^B}{q_k^A} - \lambda_a\derpars{\Phi^a}{q_i^B}{q_k^A} \right)
+ F_k^B \left( \derpars{\Lag}{q_k^B}{q_k^A} - \lambda_a\derpars{\Phi^a}{q_k^B}{q_k^A} \right) \nonumber \\
&\quad {} + \xi^0 \pounds_g^*\left( \derpars{\Lag}{g}{q_k^A} - \lambda_a\derpars{\Phi^a}{g}{q_k^A} \right)
+ \xi^{i+1} \left( \derpars{\Lag}{\xi^i}{q_k^A} - \lambda_a \derpars{\Phi^a}{\xi^{i}}{q_k^A} \right) \nonumber \\
&\quad {}+ \xi^k_1\left( \derpars{\Lag}{\xi^{k-1}}{q_k^A} - \lambda_a\derpars{\Phi^a}{\xi^{k-1}}{q_k^A} \right)
+ \lambda_a\derpar{\Phi^a}{q_{k-1}^A} \nonumber \\
\derpar{\Lag}{\xi^{k-2}} - \alpha_{k-2}
&= q_{i+1}^A \left( \derpars{\Lag}{q_i^A}{\xi^{k-1}} - \lambda_a\derpars{\Phi^a}{q_i^A}{\xi^{k-1}} \right)
+ F_k^A \left( \derpars{\Lag}{q_k^A}{\xi^{k-1}} - \lambda_a\derpars{\Phi^a}{q_k^A}{\xi^{k-1}} \right) \label{eqn:TangencyConditionConstrained} \\
&\quad {} + \xi^0 \pounds_g^*\left( \derpars{\Lag}{g}{\xi^{k-1}} - \lambda_a\derpars{\Phi^a}{g}{\xi^{k-1}} \right)
+ \xi^{i+1} \left( \derpars{\Lag}{\xi^i}{\xi^{k-1}} - \lambda_a \derpars{\Phi^a}{\xi^{i}}{\xi^{k-1}} \right) \nonumber \\
&\quad {}+ \xi^k_1\left( \derpars{\Lag}{\xi^{k-1}}{\xi^{k-1}} - \lambda_a\derpars{\Phi^a}{\xi^{k-1}}{\xi^{k-1}} \right)
+ \lambda_a\derpar{\Phi^a}{\xi^{k-2}} \nonumber \\
q_{i+1}^A\derpar{\Phi^a}{q_i^A} + F_k^A&\derpar{\Phi^a}{q_k^A} + \xi^0\pounds_g^*\derpar{\Phi^a}{g}
+ \xi^{i+1}\derpar{\Phi^a}{\xi^i} + \xi_1^k\derpar{\Phi^a}{\xi^{k-1}} = 0. \nonumber
\end{align}

If we denote by $\Omega_{\W_c}$ the pullback of the presymplectic
2-form $\Omega$ to $\W_c,$ then we deduce the following theorem.

\begin{theorem}
$(\W_{c},\Omega_{\W_c})$ is a symplectic manifold if and only if
$$
\begin{pmatrix}
\derpars{\Lag}{q_k^B}{q_k^A} - \lambda_a\derpars{\Phi^a}{q_k^B}{q_k^A} & \ & \derpars{\Lag}{q_k^A}{\xi^{k-1}} - \lambda_a\derpars{\Phi^a}{q_k^A}{\xi^{k-1}} & \ & \left(\derpar{\Phi^a}{q_{k}^A}\right)^T \\[15pt]
\derpars{\Lag}{\xi^{k-1}}{q_k^A} - \lambda_a\derpars{\Phi^a}{\xi^{k-1}}{q_k^A} & \ & \derpars{\Lag}{\xi^{k-1}}{\xi^{k-1}} - \lambda_a\derpars{\Phi^a}{\xi^{k-1}}{\xi^{k-1}} & \ & \left(\derpar{\Phi^a}{\xi^{k-1}}\right)^T \\[15pt]
\derpar{\Phi^a}{q_{k}^A} & \ & \derpar{\Phi^a}{\xi^{k-1}} & \ & \mathbf{0}
\end{pmatrix}
$$
is nondegenerate along $\W_{c}.$
\end{theorem}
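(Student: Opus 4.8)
Since $\Omega_{\W_c}=j_c^*\Omega$ is the pullback of a closed form it is automatically closed, so the statement reduces to showing that $\Omega_{\W_c}$ is nondegenerate along $\W_c$ exactly when the displayed matrix is nondegenerate. The plan is to compute the kernel $\ker(\Omega_{\W_c})_p$ pointwise and read off the condition for its vanishing. Recall that $v\in\Tan_p\W_c$ lies in $\ker(\Omega_{\W_c})_p$ if and only if $\inn(v)\Omega\in\operatorname{ann}(\Tan_p\W_c)$; thus I first need a spanning set for $\operatorname{ann}(\Tan_p\W_c)$.

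From Proposition \ref{prop:CompatibilitySubmanifoldConstrained} and the local discussion following it, $\W_c$ is cut out in $\W$ by the functions $\Phi^a$ together with the two families of momentum constraints
\begin{equation*}
\chi_A:=p_A^{k-1}-\derpar{\Lag}{q_k^A}+\lambda_a\derpar{\Phi^a}{q_k^A},\qquad
\psi:=\alpha_{k-1}-\derpar{\Lag}{\xi^{k-1}}+\lambda_a\derpar{\Phi^a}{\xi^{k-1}},
\end{equation*}
so that $\operatorname{ann}(\Tan_p\W_c)=\langle \d\Phi^a,\d\chi_A,\d\psi\rangle$. Writing the defining condition as $\inn(v)\Omega=\mu_a\,\d\Phi^a+\kappa^A\,\d\chi_A+\sigma\,\d\psi$, the central observation is that, by \eqref{eqn:KernelOmegaLocal}, $\inn(v)\Omega$ annihilates $\ker\Omega=\langle\derpar{}{q_k^A},\derpar{}{\xi^{k-1}}\rangle$ and hence carries no $\d q_k^A$- or $\d\xi^{k-1}$-component. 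Equating the $\d q_k^B$- and $\d\xi^{k-1}$-components of the right-hand side to zero produces the first two block-rows of the matrix: the second derivatives of $\Lag$ and $\Phi^a$ assemble into the $\lambda$-corrected Hessian blocks $\derpars{\Lag}{q_k^B}{q_k^A}-\lambda_a\derpars{\Phi^a}{q_k^B}{q_k^A}$, etc., while the $\d\Phi^a$ term contributes the borders $\left(\derpar{\Phi^a}{q_k^A}\right)^{T}$ and $\left(\derpar{\Phi^a}{\xi^{k-1}}\right)^{T}$. The third block-row, with the vanishing corner, is then the tangency requirement $v(\Phi^a)=0$, whose dependence on the undetermined directions reduces to $\derpar{\Phi^a}{q_k^A}$ and $\derpar{\Phi^a}{\xi^{k-1}}$, the functions $\Phi^a$ carrying no dependence on the multipliers.

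In this way the existence of a nonzero $v\in\ker(\Omega_{\W_c})_p$ becomes equivalent to the existence of a nontrivial solution $(\kappa^A,\sigma,\mu_a)$ of a homogeneous linear system whose coefficient matrix is precisely the bordered $(m+r+n)\times(m+r+n)$ matrix of the statement, where $r=\dim\g$ and $n=\codim\N$. Hence that matrix is nondegenerate along $\W_c$ if and only if $\ker\Omega_{\W_c}=0$, that is, if and only if $(\W_c,\Omega_{\W_c})$ is symplectic. This is, as expected, the same matrix that governs the unique solvability of the tangency equations \eqref{eqn:TangencyConditionConstrained} for the coefficients $F_k^A$, $\xi_1^k$ and the multipliers $\lambda_a$, reflecting the Gotay--Nester principle that a unique dynamical vector field tangent to the final constraint submanifold exists precisely when the restricted form is symplectic.

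The step I expect to be most delicate is the careful bookkeeping in this kernel computation, for two reasons. First, the multipliers $\lambda_a$ are not coordinates on $\W$: one must fix their on-$\W_c$ values and check that the $\d\lambda_a$-contributions to $\d\chi_A$ and $\d\psi$ drop out (they are weighted by $\derpar{\Phi^a}{q_k^A}$ and $\derpar{\Phi^a}{\xi^{k-1}}$ and vanish once the defining relations of $\W_c$ are imposed), so that the surviving blocks are exactly those displayed. Second, and more importantly, I must verify that the components of $v$ outside the blocks entering the matrix --- in particular its genuine $\ker\Omega$ directions $\derpar{}{q_k^A}$, $\derpar{}{\xi^{k-1}}$ together with the lower-order slots --- are uniquely pinned down by the remaining matching conditions and by tangency to $\chi_A$ and $\psi$, and so contribute no further kernel. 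Establishing that $\ker\Omega_{\W_c}$ is controlled \emph{exactly} by the displayed matrix, with no spurious directions, is the crux; once this is in place the equivalence is immediate.
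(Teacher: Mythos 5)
Your strategy --- computing $\ker(\Omega_{\W_c})$ pointwise through the condition $\inn(v)\Omega\in\operatorname{ann}(\Tan_p\W_c)$ --- is a reasonable one, and it is genuinely more self-contained than the paper's proof, which is a one-line reduction to Theorem 4.1 of \cite{art:Colombo_Martin_Zuccalli10} and Theorem 3.3 of \cite{art:Colombo_Martin11}. But the two points you yourself flag as delicate are precisely where the argument breaks, and neither is resolved correctly. First, the $\d\lambda_a$-bookkeeping: in the combination $\kappa^A\d\chi_A+\sigma\,\d\psi$ the extension-dependent terms enter weighted by $c^a:=\kappa^A\derpar{\Phi^a}{q_k^A}+\sigma\derpar{\Phi^a}{\xi^{k-1}}$, and nothing in the defining relations of $\W_c$ makes them vanish ($\derpar{\Phi^a}{q_k^A}$ is generically nonzero on $\W_c$), so your stated justification fails. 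The correct resolution --- seen, for instance, by treating $\lambda_a$ as fibre coordinates on $\W\times\R^n$ and projecting --- is that a covector lies in $\operatorname{ann}(\Tan_p\W_c)$ exactly when it admits multipliers satisfying $c^a=0$. This compatibility condition, and not the tangency $v(\Phi^a)=0$, is the true third block row of the system in $(\kappa^A,\sigma,\mu_a)$. Indeed $v(\Phi^a)=0$ is not an equation in $(\kappa,\sigma,\mu)$ at all: the $\d p_A^i$- and $\d\alpha_i$-components of your matching condition force, in the notation of \eqref{eqn:GenericVectorField}, $F_i^A=\kappa^A\delta^i_{k-1}$ and $\xi_1^i=\sigma\,\delta^i_{k-1}$, after which $v(\Phi^a)=0$ reads $\kappa^A\derpar{\Phi^a}{q_{k-1}^A}+\bigl\langle\derpar{\Phi^a}{\xi^{k-2}},\sigma\bigr\rangle+F_k^A\derpar{\Phi^a}{q_k^A}+\bigl\langle\derpar{\Phi^a}{\xi^{k-1}},\xi_1^k\bigr\rangle=0$: the derivative slots pairing with $\kappa$ and $\sigma$ are the wrong ones, and the equation couples to the undetermined components $F_k^A,\xi_1^k$. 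So the $(m+r+n)\times(m+r+n)$ system you exhibit is not closed in $(\kappa,\sigma,\mu)$, and the asserted equivalence is not established.

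Second, the ``spurious directions'' you defer are not a formality but the other half of the theorem. Any $v$ whose only nonzero components are $F_k^A,\xi_1^k$ satisfies $\inn(v)\Omega=0\in\operatorname{ann}(\Tan_p\W_c)$ trivially, hence lies in $\ker\Omega_{\W_c}$ as soon as it is tangent to $\W_c$. Writing the tangency of such a $v$ to $\Phi^a=0$, $\chi_A=0$, $\psi=0$, with $\Lambda_a:=v(\lambda_a)$ as auxiliary unknown, produces the homogeneous bordered system a second time, now in the unknowns $(F_k^A,\xi_1^k,\Lambda_a)$. The linear system governing $\ker\Omega_{\W_c}$ is therefore block-triangular with \emph{two} copies of the displayed matrix on the diagonal: nondegeneracy kills both $(\kappa,\sigma,\mu)$ and $(F_k^A,\xi_1^k,\Lambda_a)$, giving $v=0$; conversely, a null vector with nonzero $(F_k^A,\xi_1^k)$-part yields a nonzero kernel element (null vectors supported purely in the $\mu$-slot correspond to the border $\bigl(\derpar{\Phi^a}{q_k^A},\derpar{\Phi^a}{\xi^{k-1}}\bigr)$ dropping rank, a degenerate situation in which $\W_c$ is not regularly defined and which must be excluded by hypothesis). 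Without this second computation the proof is incomplete, and with the third block row misattributed as it stands, carrying your outline to completion would produce a different, incorrect matrix.
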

\begin{proof}
The proof of this theorem is a straightforward computation using
Theorem 4.1 in \cite{art:Colombo_Martin_Zuccalli10} and Theorem 3.3
in \cite{art:Colombo_Martin11}.
\end{proof}

Now, let $\gamma \colon \R \to \W$ be an integral curve of $X$
locally given by \eqref{eqn:GenericCurve}. Then the condition $X
\circ \gamma = \dot{\gamma}$ gives the following system of
differential equations for the component functions of $\gamma$
\begin{align}
\dot{q}_1^A = q_{i+1}^A, \,  \\
\dot{p}_A^0 = \derpar{\Lag}{q_0^A} - \lambda_a\derpar{\Phi^a}{q_0^A} \quad , \quad
\dot{p}_A^i = \derpar{\Lag}{q_i^A} - \lambda_a\derpar{\Phi^a}{q_i^A} - p_A^{i-1}, \,  \label{eqn:DynEqCurveLocalManifoldConstrained} \\
\dot{g} = g\xi^0 \quad , \quad \dot{\xi}^{i-1} = \xi^{i} \, , \\
\dot{\alpha}_0 = \pounds_g^*\left( \derpar{\Lag}{g} - \lambda_a\derpar{\Phi^a}{g} \right) + ad_{\xi_1^0}^*\alpha_0 \quad , \quad
\dot{\alpha}_{i+1} = \derpar{\Lag}{\xi^i} - \lambda_a\derpar{\Phi^a}{\xi^i} - \alpha_i \, , \label{eqn:DynEqCurveLocalGroupConstrained}
\end{align}
in addition to equations
\eqref{eqn:DynEqLocalHOMomentaManifoldConstrained},
\eqref{eqn:DynEqLocalHOMomentaGroupConstrained} and
\eqref{eqn:LocalConstraintsSubmanifold}. Now, using equations
\eqref{eqn:DynEqLocalHOMomentaManifoldConstrained} in combination
with \eqref{eqn:DynEqCurveLocalManifoldConstrained} we obtain the
$k$th order constrained Euler-Lagrange equations
\begin{equation}\label{eqn:EulerLagrangeConstrained}
\restric{\sum_{i=0}^{k}(-1)^{i}\frac{d^i}{dt^i}\left( \derpar{\Lag}{q_i^A} - \lambda_a\derpar{\Phi^a}{q_i^A} \right)}{\gamma} = 0 \, .
\end{equation}
On the other hand, using equations \eqref{eqn:DynEqLocalHOMomentaGroupConstrained} in combination
with \eqref{eqn:DynEqCurveLocalManifoldConstrained} we obtain the $k$th order trivialized constrained
Euler-Lagrange equation
\begin{equation}\label{eqn:TrivializedEulerLagrangeConstrained}
\restric{\left( \frac{d}{dt} - ad_{\xi^0}^* \right)\sum_{i=0}^{k-1}(-1)^i\frac{d^i}{dt^i}\left( \derpar{\Lag}{\xi^i} - \lambda_a\derpar{\Phi^a}{\xi^i} \right)}{\gamma}
= \restric{\pounds_g^*\left( \derpar{\Lag}{g} - \lambda_a\derpar{\Phi^a}{g} \right)}{\gamma} \, .
\end{equation}
Therefore, a dynamical trajectory $\gamma \colon \R \to \W$ of the system must satisfy the equations
\eqref{eqn:EulerLagrangeConstrained} and \eqref{eqn:TrivializedEulerLagrangeConstrained}, in addition
to $\Phi^a(q_i^A(t),q_k^A(t),g(t),\xi^i(t)) = 0$.

Finally, if both the extended Lagrangian function $\Lag \in
\Cinfty(\Tan^kM \times G \times k\g)$ and the constraint functions
$\Phi^a \in \Cinfty(\Tan^{k}M \times G \times k\g)$ are
left-invariant, then we can define the reduced Lagrangian function
$\ell \in \Cinfty(\Tan^kM \times k\g)$ and the reduced constraint
functions $\phi^a \in \Cinfty(\Tan^kM \times k\g)$ as
$$
\ell(q_i^A,q_k^A,\xi^i) = \Lag(q_i^A,q_k^A,e,\xi^i) \quad , \quad
\phi^a(q_i^A,q_k^A,\xi^i) = \Phi^a(q_i^A,q_k^A,e,\xi^i) \, ,
$$
and then equations \eqref{eqn:TrivializedEulerLagrangeConstrained} become
$$
\restric{\left( \frac{d}{dt} - ad_{\xi^0}^* \right)\sum_{i=0}^{k-1}(-1)^i\frac{d^i}{dt^i}\left( \derpar{\ell}{\xi^i} - \lambda_a\derpar{\phi^a}{\xi^i} \right)}{\gamma} = 0 \, .
$$
Note that equations \eqref{eqn:EulerLagrangeConstrained} remain the
same, just replacing $\Lag$ by $\ell$ and $\Phi^a$ by $\phi^a$.

%%%%%%%%%%%%%%%%%%%%%%%%%%%%%%%%%%%%%%%%%%%%%%%%%%
%%%%%%%%%%%%%%%%%%%%%%%%%%%%%%%%%%%%%%%%%%%%%%%%%%
\section{Application to optimal control of underactuated mechanical systems}
\label{sec:OptimalControlUnderactuated}
%%%%%%%%%%%%%%%%%%%%%%%%%%%%%%%%%%%%%%%%%%%%%%%%%%
%%%%%%%%%%%%%%%%%%%%%%%%%%%%%%%%%%%%%%%%%%%%%%%%%%

In this section we study optimal control problems for
underactuated mechanical systems (or superarticulated mechanical
systems following the terminology in \cite{art:Baillieul99}). The presence of underactuated
mechanical systems is ubiquitous in engineering applications as a
result, for instance, of design choices motivated by the search of
less cost devices or as a result of a failure regime in fully
actuated mechanical systems. The underactuated systems include
spacecraft, underwater vehicles, mobile robots, helicopters, wheeled
vehicles, mobile robots, underactuated manipulators, etc.

Let $U \subseteq \R^r$ be the control manifold where $u(t) \in U$ is the control parameter.
We assume that all the control systems are controllable, that is, for any two points $q_0$
and $q_T$ in the configuration space $Q$, there exists an admissible control $u(t)$ defined
on some interval $[0,T] \subset \R$ such that the system with initial condition $q_0$ reaches
the point $q_T$ in time $T$ (see \cite{book:Bullo_Lewis05} for details).

Let us consider that the configuration space $Q$ of the system is a trivial principal bundle,
that is, $Q = M \times G$, where $M$ is an $m$-dimensional smooth manifold and $G$
a finite dimensional Lie group. Let $\Lag \in \Cinfty(\Tan M \times \g)$ be a left-trivialized
Lagrangian function, where $\g$ is the Lie algebra of $G$.

The Euler-Lagrange equations with controls are
\begin{equation}\label{eqn:EulerLagrangeControls}
\begin{array}{l}
\displaystyle \frac{d}{dt} \left( \derpar{\Lag}{\dot{q}^A} \right) - \derpar{\Lag}{q^A} = u_a\mu_A^a(q) \, , \quad (1 \leqslant A \leqslant m) \\[10pt]
\displaystyle \frac{d}{dt} \left( \derpar{\Lag}{\xi} \right) - ad_\xi^*\left( \derpar{\Lag}{\xi} \right) = u_a\eta^a(q) \, ,
\end{array}
\end{equation}
where $\mathcal{B}^a = \{ (\mu^a,\eta^a) \}$, $\mu^a(q) \in \Tan^*_qM$, $\eta^a(q) \in \g^*$, $a = 1,\ldots,r$,
is a set of independent  sections of the bundle $\pi \colon \Tan^*M \times \g^* \to M$, and $u_a$ are admissible controls.

We complete $\mathcal{B}^a$ to a basis $\{ \mathcal{B}^a,\mathcal{B}^\alpha \}$
of $\Gamma(\pi)$, and let us consider its dual basis $\{ \mathcal{B}_a,\mathcal{B}_\alpha \}$,
that is, a basis of $\Gamma(\tau)$, where $\tau \colon \Tan M \times \g \to M$. Observe that
$\Gamma(\tau) = \vf(M) \times \Cinfty(M,\g)$ (see \cite{art:DeLeon_Marrero_Martinez05} for details).
This basis induces coordinates $(q^A,\dot{q}^A,\xi^a,\xi^\alpha)$ on $\Tan M \times \g$.

If we denote $\mathcal{B}_a = \{ (X_a, \Xi_a) \} \subset \Gamma(\tau)$ and
$\mathcal{B}_\alpha = \{ X_\alpha, \Xi_\alpha \} \subset \Gamma(\tau)$,
where $X_a,X_\alpha \in \vf(M)$ are locally given by $X_a(q) = X_a^A(q)\restric{\derpar{}{q^A}}{q}$,
$X_\alpha(q) = X_\alpha^A(q)\restric{\derpar{}{q^A}}{q}$, and $\Xi_a(q), \Xi_\alpha(q) \in \g$, with $q \in M$,
then equations \eqref{eqn:EulerLagrangeControls} can be rewritten as
\begin{equation}\label{eqn:EulerLagrangeControlsRewritten}
\begin{array}{l}
\displaystyle \left( \frac{d}{dt}\left( \derpar{\Lag}{\dot{q}^A} \right) - \derpar{\Lag}{q^A} \right)X_a^A(q)
+ \left( \frac{d}{dt}\left( \derpar{\Lag}{\xi} \right) - ad_\xi^*\derpar{\Lag}{\xi} \right)\Xi_a(q) = u_a \, , \\[10pt]
\displaystyle \left( \frac{d}{dt}\left( \derpar{\Lag}{\dot{q}^A} \right) - \derpar{\Lag}{q^A} \right)X_\alpha^A(q)
+ \left( \frac{d}{dt}\left( \derpar{\Lag}{\xi} \right) - ad_\xi^*\derpar{\Lag}{\xi} \right)\Xi_\alpha(q) = 0 \, .
\end{array}
\end{equation}

\paragraph{Optimal control problem.}

The optimal control problem consists in finding a trajectory $(q(t),\dot{q}(t),\xi(t),u(t))$ of the state
variables and control inputs solving equation \eqref{eqn:EulerLagrangeControlsRewritten}
given initial and final conditions $(q(0),\dot{q}(0),\xi(0))$ and $(q(T),\dot{q}(T),\xi(T))$, respectively,
and minimizing the following functional
$$
\mathcal{A}(q,\dot{q},\xi,u) = \int_{0}^{T} C(q(t),\dot{q}(t),\xi(t),u(t))dt \, ,
$$
where $C \colon (\Tan M \times \g) \times U \to \R$ is a cost function.

Following \cite{book:Bloch03}, to solve this optimal control problem is equivalent to solve the following
second-order variational problem with second-order constraints
\begin{align*}
&\textnormal{min } \widetilde{\Lag}(q^A,\dot{q}^A,\ddot{q}^A,\xi^i,\dot{\xi}^i) \\
&\textnormal{subject to } \Phi^\alpha(q^A,\dot{q}^A,\ddot{q}^A,\xi^i,\dot{\xi}^i) \, , \, \alpha = 1,\ldots,m
\end{align*}
where $\widetilde{\Lag}, \Phi^\alpha \in \Cinfty(\Tan^2M \times 2\g)$ are given by
$$
\widetilde{\Lag}(q^A,\dot{q}^{A},\ddot{q}^{A},\xi^i,\dot{\xi}^i)
=C\left(q^{A},\dot{q}^{A},\xi^i,F_{a}(q^{A},\dot{q}^{A},\ddot{q}^{A},\xi^{i},\dot{\xi}^{i})\right) \, ,
$$
where $C$ is the cost function and
$$
F_{a}(q^{A},\dot{q}^{A},\ddot{q}^{A},\xi^{i},\dot{\xi}^{i}) =
\left(\frac{d}{dt}\left(\derpar{\Lag}{\dot{q}^A}\right)-\derpar{\Lag}{q^A}\right)X_a^{A}(q)
+\left(\frac{d}{dt}\left(\derpar{\Lag}{\xi}\right)-\left(ad_{\xi}^{*}\derpar{\Lag}{\xi}\right)\right)\Xi_a(q) \, .
$$
The Lagrangian $\widetilde{\Lag}$ is subjected to the second-order
constraints:
$$
\Phi^{\alpha}(q^{A},\dot{q}^{A},\ddot{q}^{A},\xi^i,\dot{\xi}^i)=
\left(\frac{d}{dt}\left(\derpar{\Lag}{\dot{q}^{A}}\right) - \derpar{\Lag}{q^{A}}\right)X_{\alpha}^{A}(q)
+\left(\frac{d}{dt}\left(\derpar{\Lag}{\xi}\right)-\left(ad_{\xi}^{*}\derpar{\Lag}{\xi}\right)\right)\Xi_{\alpha}(q) \, .
$$

%%%%%%%%%%%%%%%%%%%%%%%%%%%%%%%%%%%%%%%%%%%%%%%%%%
\subsection{Optimal control of an underactuated vehicle}
%%%%%%%%%%%%%%%%%%%%%%%%%%%%%%%%%%%%%%%%%%%%%%%%%%

Consider a rigid body moving in special Euclidean group of the plane
$SE(2)$ with a thruster to adjust its pose. The configuration of
this system is determined by a tuple $(x, y, \theta, \gamma)$, where
$(x, y)$ is the position of the center of mass, $\theta$ is the
orientation of the blimp with respect to a fixed basis, and $\gamma$
the orientation of the thrust with respect to a body basis.
Therefore, the configuration manifold is  $Q = SE(2)\times
\mathbb{S}^1$ (see \cite{book:Bullo_Lewis05} and references
therein), where $(x, y, \theta)$ are the local coordinates of
$SE(2)$ and $\gamma$ is the local coordinate of $\mathbb{S}^1$.

The Lagrangian of this system is given by its kinetic energy
$$
\Lag(x, y, \theta, \gamma, \dot{x}, \dot{y}, \dot{\theta}, \dot{\gamma})=\frac{1}{2}m(\dot{x}^2+\dot{y}^2)+\frac{1}{2}J_1\dot{\theta}^2+\frac{1}{2}J_2( \dot{\theta} + \dot{\gamma})^2 \, ,
$$
and the input forces are
$$
F^1 = \cos(\theta+\gamma)\,\d x+\sin (\theta+\gamma)\, \d y - p\sin \gamma \d\theta \quad ; \quad
F^2 = \d\gamma,
$$
where the control forces that we consider are applied to a point on
the body with distance $p > 0$ from the center of mass ($m$ is the
mass of the rigid body), along the body $x$-axis. Note this system
is an example of underactuated mechanical system when the
configuration space is a trivial principal bundle.

The system is invariant under the left multiplication of the Lie
group $G=SE(2)$:
\[
\begin{array}{rcl}
\Phi\colon SE(2)\times SE(2)\times \mathbb{S}^{1}&\longrightarrow& SE(2)\times \mathbb{S}^{1}\\
((a, b, \alpha), (x, y, \theta, \gamma))&\longmapsto&
(x\cos\alpha-y\sin \alpha +a, x\sin\alpha + y\cos\alpha + b, \theta+\alpha, \gamma).
\end{array}
\]
A basis of the Lie algebra $\mathfrak{se}(2) \simeq \R ^3$ of $SE(2)$
is given by
$$
e_1=
\begin{pmatrix}
0 & -1 & 0\\
1 & 0 & 0\\
0 & 0 & 0
\end{pmatrix},\quad
e_2=
\begin{pmatrix}
0 & 0 & 1\\
0 & 0 & 0\\
0 & 0 & 0
\end{pmatrix},\quad
e_3=
\begin{pmatrix}
0 & 0 & 0\\
0 & 0 & 1\\
0 & 0 & 0
\end{pmatrix},
$$
from we have that
$$
[e_1,e_2]=e_3,\quad [e_1,e_3]=-e_2,\quad [e_2,e_3]=0.
$$
Thus, we can write down the structure constants as
$$
\mathcal{C}_{31}^{2}=\mathcal{C}_{23}^{1}=-1, \mathcal{C}_{13}^2=\mathcal{C}_{32}^{1}=1 \, ,
$$
and all others vanish. An element $\xi \in \mathfrak{se}(2)$ is of the
form $ \xi =\xi_1\, e_1+\xi_2\, e_2+\xi_3\, e_3;$ therefore the
reduced Lagrangian $\ell\colon\Tan\mathbb{S}^1\times\mathfrak{se}(2)\to\R$
is given by
$$
\ell(\gamma,\dot{\gamma},\xi)=\frac{1}{2}m(\xi_1^2+\xi_2^2)+\frac{J_1+J_2}{2}\xi_3^2+J_2\xi_3\dot{\gamma}+\frac{J_2}{2}\dot{\gamma}^2 \, .
$$
Then the reduced Euler-Lagrange equations with controls are given by
(see, for example, \cite{art:Colombo_Jimenez_Martin12} and
\cite{art:Colombo_Jimenez_Martin12-2})
\begin{eqnarray*}
m \dot{\xi_1}&=&u_1\cos{\gamma} \, ,\\
m\dot{\xi_2}+(J_1+J_2)\xi_1\xi_3+J_2\xi_1\dot{\gamma}-m\xi_1\xi_3&=&u_1\sin{\gamma} \, ,\\
(J_1+J_2)\dot{\xi}_3+J_2\ddot{\gamma}-m\xi_2(\xi_1+\xi_3)&=&-u_1p\sin{\gamma} \, ,\\
J_2(\dot{\xi}_3+\ddot{\gamma})&=&u_2 \, .
\end{eqnarray*}
On the other hand, choosing the adapted basis
$\{\mathcal{B}_a,\mathcal{B}_{\alpha}\}$ the modified equations of
motion \eqref{eqn:EulerLagrangeControlsRewritten} read in this case as
\begin{eqnarray*}
m(\cos{\gamma}\dot{\xi}_1+\sin{\gamma}(\dot{\xi}_2-\xi_1\xi_3))+(J_1+J_2)\xi_1\xi_3\sin\gamma+J_2\xi_1\dot{\gamma}\sin\gamma&=&u_1 \, ,\\
m(\cos{\gamma}(\dot{\xi}_2-\xi_1\xi_3)-\sin{\gamma}\dot{\xi}_1)+\xi_1\xi_3(J_1+J_2)\cos\gamma+J_2\xi_1\dot{\gamma}\cos{\gamma}&=&0 \, ,\\
\frac{J_1+J_2}{p}(\dot{\xi}_3+p\xi_1\xi_3)+\frac{J_2}{p}(\ddot{\gamma}+p\xi_1\dot{\gamma})+m\left(\dot{\xi}_2-\xi_1\xi_3-\frac{\xi_2\xi_1+\xi_3\xi_2}{p}\right)&=&0 \, ,\\
J_2(\dot{\xi}_3+\ddot{\gamma})&=&u_2 \, .
\end{eqnarray*}
Now, we can study the optimal control problem that consists, as
mentioned before, on finding a trajectory of state variables and
control inputs satisfying the previous equations from given initial
and final conditions $(\gamma(0),\dot{\gamma}(0),\xi(0)$),
$(\gamma(T),\dot{\gamma}(T),\xi(T))$ respectively,  and extremizing
the cost functional
$$
\int^T_0 (\rho_1u_1^2+\rho_2u_2^2)\; \d t \, ,
$$
where $\rho_1$ and $\rho_2$ are non-zero constants.

The related optimal control problem is equivalent to the
second-order Lagrangian problem with second-order constraints
defined as follows. Extremize
$$
\widetilde{\mathcal{A}}=\int_{0}^{T}\widetilde{\Lag}(\xi,\dot{\xi},\gamma,\dot{\gamma},\ddot{\gamma})\d t,
$$
subject to second-order constraints given by the functions
\begin{subequations}\label{Phid}
\begin{align}
\Phi^{1}&=m(\cos{\gamma}(\dot{\xi}_2-\xi_1\xi_3)-\sin{\gamma}\dot{\xi}_1)+\xi_1\xi_3(J_1+J_2)\cos{\gamma}+J_2\xi_1\dot{\gamma}\cos{\gamma} \, ,\label{Phida}\\
\Phi^{2}&=\frac{J_1+J_2}{p}(\dot{\xi}_3+p\xi_1\xi_3)+\frac{J_2}{p}(\ddot{\gamma}+p\xi_1\dot{\gamma})+m\left(\dot{\xi}_2-\xi_1\xi_3-\frac{\xi_2\xi_1+\xi_3\xi_2}{p}\right) \, .\label{Phib}
\end{align}
\end{subequations}
Here, $\widetilde{\Lag} \colon \Tan^{2}\mathbb{S}^{1}\times 2\mathfrak{se}(2)\to\R$
is defined by
\begin{align}\label{tildeL}
\widetilde{\Lag}(\gamma,\dot{\gamma},\ddot{\gamma},\xi,\dot{\xi}) &=
\rho_1\left(m(\cos{\gamma}\dot{\xi}_1+\sin{\gamma}(\dot{\xi}_2-\xi_1\xi_3))+(J_1+J_2)\xi_1\xi_3\sin\gamma+J_2\xi_1\dot{\gamma}\sin\gamma\right)^{2} \\
& \quad {} +\rho_2J_2^2(\dot{\xi}_3+\ddot{\gamma})^2 \, , \nonumber
\end{align}
which basically is the cost function $C=\rho_1u_1^2+\rho_2u_2^2$ in
terms of the new variables.

The submanifold $\mathcal{M}$ of $\Tan^{2}\mathbb{S}^{1}\times SE(2)\times 2\mathfrak{se}(2)$
is given by the constraints
\begin{align*}
&\ddot{\gamma} = -\frac{mp}{J_2}\left(\dot{\xi}_{2}-\xi_{1}\xi_{3}-\xi_{2}\frac{\xi_{1}+\xi_{3}}{p}\right)-\frac{J_1+J_2}{J_2}(\dot{\xi}_{3}+p\xi_1\xi_3)-p\xi_1\dot{\gamma} \, , \\
&\dot{\xi}_{1} = \frac{1}{\tan\gamma}\left(\frac{J_1+J_2}{m}\xi_1\xi_3+\frac{J_2}{m}\xi_1\dot{\gamma}+\dot{\xi}_{2}-\xi_{1}\xi_{3}\right).
\end{align*}

We consider the submanifold $\overline{\W}=\mathcal{M}\times
\Tan^{*}(\Tan\mathbb{S}^{1})\times 2\mathfrak{se}(2)^{*}$ with induced
coordinates
$$
(\gamma,\dot{\gamma},g,\xi_1,\xi_2,\xi_3,\dot{\xi}_{2},\dot{\xi}_{3},\eta_1,\eta_2,
p_1, p_2, p_3, \tilde{p}_1,\tilde{p}_2,\tilde{p}_3) \, ,
$$
and the restriction $\widetilde{\Lag}_{\mathcal{M}}$ of $\widetilde{\Lag}$
given by
\begin{align*}
\widetilde{\Lag}_{\mathcal{M}}=& \, \rho_{1}\left[\frac{m\cos\gamma}{\tan\gamma}\left(\frac{J_1+J_2}{m}\xi_1\xi_3+\frac{J_2}{m}\xi_{1}\dot{\gamma}+\dot{\xi}_{2}-\xi_1\xi_3\right)+(J_1+J_2)\xi_1\xi_3\sin\gamma+J_2\xi_1\dot{\gamma}\sin\gamma\right.\\
& \left.+\sin\gamma(\dot{\xi}_{2}-\xi_1\xi_3)\right]^{2}+\rho_2J_{2}^{2}\left[\dot{\xi}_{3}-\frac{mp}{J_2}
\left(\dot{\xi}_{2}-\xi_{1}\xi_{3}-\xi_{2}\frac{\xi_1+\xi_3}{p}\right)
-\frac{J_1+J_2}{J_2}(\dot{\xi}_{3}+p\xi_1\xi_3)-p\xi_{1}\dot{\gamma}\right]^{2}.
\end{align*}
%The primary constraints are given by
%$$\dot{\eta}_{1}-2\rho_{1}\left(\left(\frac{\cos\gamma}{\tan\gamma}+\sin\gamma\right)\cdot\mathcal{A}\right)\left(\cos\gamma-\frac{\sin\gamma}{\tan\gamma}-\frac{1}{\cos^{2}\gamma+\tan^{2}\gamma}\right)+\lambda_{1}\sin\gamma\mathcal{A}\left(1+\frac{1}{\tan\gamma}\right)=0.$$

Observe that we use the intrinsic formulation in the submanifold
$\mathcal{M}$ because the constraints enable us to write the variables
$\ddot{\gamma}$ and $\dot{\xi}_{1}$ in terms of the others,
and thus it is easy to determine a subset of
intrinsic coordinates.

Then, we can write the equations of motion of the optimal control
problem for this underactuated system. For simplicity, we consider
the particular case $J_1=J_2=1$ and $m=p=1$ then the equations of
motion of the optimal control system are:
\begin{align*}
\dot{\xi}_{i} &= \frac{d}{dt}\xi_{i} \quad , \quad \dot{\gamma}=\frac{d}{dt}\gamma \quad , \quad
\ddot{\gamma}=\frac{d}{dt}\dot{\gamma} \ ,\quad i=1,2,3.\\
\dot{\eta}_{1} &= 2\rho_{1}\left(\left(\frac{\cos\gamma}{\tan\gamma} + \sin\gamma\right)\cdot\mathcal{A}\right)\left(\cos\gamma-\frac{\sin\gamma}{\tan\gamma}-\frac{1}{\cos^{2}\gamma+\tan^{2}\gamma}\right)+\lambda_{1}\sin\gamma\mathcal{A}\left(1+\frac{1}{\tan\gamma}\right) \, ,\\
\dot{\eta}_{2} &= 2\xi_{1}\rho_{1}\left(\left(\frac{\cos\gamma}{\tan\gamma}+\sin\gamma\right)^{2}\cdot\mathcal{A}\right)-\xi_{1}(\lambda_{1}\cos\gamma+\lambda_{2}-2\mathcal{B}\rho_{2})-\eta_{1} \, ,\\
\dot{\tilde{p}}_{1} &= 2\rho_{1}\left(\left(\frac{\cos\gamma}{\tan\gamma}+\sin\gamma\right)^{2}\cdot\mathcal{A}\right)(\mathcal{B}-\lambda_{2})(\xi_{3}+\dot{\gamma}-\xi_{2})-\lambda_{1}\cos\gamma(\dot{\gamma}+\xi_{3}) \, ,\\
\dot{\tilde{p}}_{2} &= (\lambda_{2}-2\mathcal{B}\rho_{2})(\xi_{1}+\xi_{3}) \, ,\\
\dot{\tilde{p}}_{3} &= 2\xi_{1}\rho_{1}\left(\left(\frac{\cos\gamma}{\tan\gamma}+\sin\gamma\right)^{2}\cdot\mathcal{A}\right)-\lambda_{1}\cos\gamma\xi_{1}-(\xi_{1}-\xi_{2})(\lambda_{2}+2\rho_{2}\mathcal{B})\\
\dot{p}_{i} &= ad_{\xi}^{*}p_{i} \, ,\quad i=1,2,3.
\end{align*}
where
$$
\xi=(\xi_{1},\xi_{2},\xi_{3}) \quad ; \quad
\mathcal{A}=\xi_{1}\xi_{3}+\xi_{1}\dot{\gamma}+\dot{\xi}_{2} \quad ; \quad
\mathcal{B}=\dot{\xi}_{3}+\xi_{1}\xi_{3}+\dot{\xi}_{2}-\xi_{2}\xi_{1}-\xi_{2}\xi_{3}+\xi_{1}\dot{\gamma}
$$
and the coadjoint operator is just the cross product,
$ad^{*}_{\xi}p=\xi\times p$ using the identification of
$\mathfrak{se}(2)$ with $\R^{3}.$ One can check that, using
different techniques, these equations are the equations of control
in the classical literature as \cite{book:Bullo_Lewis05}. Also one
can compare these equations with the equations obtained using
variational tools in \cite{art:Colombo_Jimenez_Martin12} and
\cite{art:Colombo_Jimenez_Martin12-2}.

In all cases we additionally have the reconstruction equation
$$
\dot{g} (t) = g(t) (\xi_1(t)e_1 + \xi_2(t)e_2 + \xi_3(t)e_3)
$$
with boundary conditions $g(t_0)$ and $g(t_f)$,
where $g(t)=(x(t),y(t),\theta(t))$.

Finally, the regularity condition is given by the matrix
$$
A = \begin{pmatrix}
2\rho_2 & 0 & 0 & 2\rho_2 & 0 & 1 \\
0 & 2\rho_{1}\cos^{2}\gamma & 2\rho_{1}\sin\gamma\cos\gamma & 0 & -\sin\gamma & 0 \\
0 & 2\rho_{1}\sin\gamma\cos\gamma & 2\rho_{1}\sin^{2}\gamma & 0 & \cos\gamma & 1 \\
2\rho_{2} & 0 & 0 & 2\rho_{2} & 0 & 2 \\
0 & -\sin\gamma & \cos\gamma & 0 & 0 & 0 \\
1 & 0 & 1 & 2 & 0 & 0 \\
\end{pmatrix} ,
$$
whose determinant is
$$
\det A=
4\rho_{1}\rho_{2}\sin^{4}\gamma+4\rho_{1}\rho_{2}\cos^{4}\gamma+8\rho_{1}\rho_{2}\sin^{2}\gamma\cos^{2}\gamma
= 4\rho_{1}\rho_{2}(\sin^{2}\gamma+\cos^{2}\gamma)^{2}=4\rho_{1}\rho_{2}\neq 0 \, .
$$
Therefore the algorithm stabilizes at the first constraint
submanifold $\W_{c}$. Moreover, there exists a unique solution of
the dynamics, the vector field  $X\in \vf(\W_c)$ which
satisfies $\inn(X)\Omega_{\W_{c}} = \d {H}_{\W_{c}}$.
In consequence, we have a unique control input which
extremizes (minimizes) the objective function ${\mathcal A}$.
If we take the flow $F_{t} \colon \W_{c} \to \W_{c}$
of the solution vector field $X$ then we have that $F_{t}^{*}\Omega_{\W_c} = \Omega_{\W_c}$.

%%%%%%%%%%%%%%%%%%%%%%%%%%%%%%%%%%%%%%%%%%%%%%%%%%
%%%%%%%%%%%%%%%%%%%%%%%%%%%%%%%%%%%%%%%%%%%%%%%%%%
\section{Conclusions and further research}
%%%%%%%%%%%%%%%%%%%%%%%%%%%%%%%%%%%%%%%%%%%%%%%%%%
%%%%%%%%%%%%%%%%%%%%%%%%%%%%%%%%%%%%%%%%%%%%%%%%%%

We have defined, following an intrinsic point of view, the equations
of motion for constrained variational higher-order Lagrangian problems
on higher-order trivial principal bundles. As a particular case, we obtain the
higher-order Lagrange-Poincar\'e equations (see \cite{art:Colombo_Jimenez_Martin12,
art:GayBalmaz_Holm_Meier_Ratiu_Vialard12_1}). As
an interesting application we deduce the equations of motion for
optimal control of underactuated mechanical systems defined on
principal bundles. These systems appear in numerous engineering and
scientific fields. In this sense we study the optimal control of an
underactuated vehicle.

Moreover, in a future paper we will generalize the presented
construction of higher-order Euler-Lagrange equations to the case of
non-trivial principal bundles and in the context of Lie algebroids.
This last approach will be interesting because we may derive the
equations of motion for different cases as, for instance, higher-order
Euler-Poincar\'e equations, Lagrange-Poincar\'e equations and the
reduction by morphisms in a unified way
(see \cite{art:DeLeon_Marrero_Martinez05,art:Iglesias_Marrero_Martin_Sosa08,art:Saunders04}).

The case of optimal control problems for mechanical systems with
nonholonomic constraints will be also studied using some of the
ideas exposed through the paper (see \cite{art:Cortes_DeLeon_Marrero_Martin_Martinez06,
art:DeLeon_Jimenez_Martin12,art:Hussein_Bloch08} for more
details). Finally, we would like to point out that a slight
modification of the techniques presented in this work would allow to
approach the Clebsh-Pontryagin optimal control problem (see \cite{art:GayBalmaz_Holm_Ratiu13,
book:Holm08}).

%%%%%%%%%%%%%%%%%%%%%%%%%%%%%%%%%%%%%%%%%%%%%%%%%%
\section*{Acknowledgments}
%%%%%%%%%%%%%%%%%%%%%%%%%%%%%%%%%%%%%%%%%%%%%%%%%%

We wish to thank Prof. D. Mart\'in de Diego for fruitful
discussions and comments. We acknowledge the financial support of
the \textsl{Ministerio de Ciencia e Innovaci\'on} (Spain), projects
MTM 2010-21186-C02-01, MTM2011-22585 and  MTM2011-15725-E; AGAUR,
project 2009 SGR:1338.; IRSES-project ``Geomech-246981''; and ICMAT
Severo Ochoa project SEV-2011-0087. P.D. Prieto-Mart\'{\i}nez wants
to thank the UPC for a Ph.D grant, and L.Colombo wants to thank CSIC
for a JAE-Pre grant.

%We thank Mr. Jeff Palmer for his assistance in preparing the English
%version of the manuscript.

{\small
\bibliography{Bibliografia.bib}

\providecommand{\bysame}{\leavevmode\hbox to3em{\hrulefill}\thinspace}
\providecommand{\MR}{\relax\ifhmode\unskip\space\fi MR }
% \MRhref is called by the amsart/book/proc definition of \MR.
\providecommand{\MRhref}[2]{%
  \href{http://www.ams.org/mathscinet-getitem?mr=#1}{#2}
}
\providecommand{\href}[2]{#2}
\begin{thebibliography}{10}

\bibitem{art:Baillieul99}
J.~{Baillieul}, \emph{The geometry of controlled mechanical systems},
  Mathematical control theorySpringerNew York 1999pp.~322--354.

\bibitem{art:Barbero_Echeverria_Martin_Munoz_Roman07}
M.~{Barbero Li\~{n}\'{a}n}, A.~{Echeverr\'{\i}a-Enr\'{\i}quez}, D.~{Mart\'{\i}n
  de Diego}, M.C. {Mu\~{n}oz-Lecanda}, and N.~{Rom\'an Roy}, ``Skinner-rusk
  unified formalism for optimal control systems and applications'', \textit{J.
  Math. Phys. A: Math. Theor.} \textbf{40}(40)  (2007) 12071--12093.

\bibitem{art:Barbero_Echeverria_Martin_Munoz_Roman08}
M.~{Barbero Li\~{n}\'{a}n}, A.~{Echeverr\'{\i}a-Enr\'{\i}quez}, D.~{Mart\'{\i}n
  de Diego}, M.C. {Mu\~{n}oz-Lecanda}, and N.~{Rom\'an-Roy}, ``Unified
  formalism for non-autonomous mechanical systems'', \textit{J. Math. Phys.}
  \textbf{49}(6)  (2008) 062902.

\bibitem{art:Benito_Martin05}
R.~{Benito} and D.~{Mart{\'{\i}}n de Diego}, \emph{Hidden symplecticity in
  {H}amilton's principle algorithms}, Differential geometry and its
  applicationsMatfyzpress, Prague 2005pp.~411--419.

\bibitem{book:Bloch03}
A.M. {Bloch}, \emph{Nonholonomic {Mechanics} and {Control}}, Interdisciplinary
  Applied Mathematics Series, vol.~24, Springer-Verlag, New York 2003.

\bibitem{book:Bullo_Lewis05}
F.~{Bullo} and A.~{Lewis}, \emph{{Geometric} {Control} of {Mechanical}
  {Systems}: {Modeling}, {Analysis}, and {Design} for {Simple} {Mechanical}
  {Control} {Systems}}, Texts in Applied Mathematics, vol.~49, Springer-Verlag,
  New York 2005.

\bibitem{art:Camarinha_SilvaLeita_Crouch01}
M.~Camarinha, F.~Silva~Leite, and P.~Crouch, ``On the geometry of {R}iemannian
  cubic polynomials'', \textit{Differential Geom. Appl.} \textbf{15}(2)  (2001)
  107--135.

\bibitem{art:Campos_DeLeon_Martin_Vankerschaver09}
C.M. {Campos}, M.~{de Le\'on}, D.~{Mart\'{\i}n de Diego}, and
  J.~{Vankerschaver}, ``Unambigous formalism for higher-order lagrangian field
  theories'', \textit{J. Phys A: Math Theor.} \textbf{42} (2009) 475207.

\bibitem{proc:Cantrijn_Crampin_Sarlet86}
F.~{Cantrijn}, M.~{Crampin}, and W.~{Sarlet}, ``Higher-order differential
  equations and higher-order {Lagrangian} mechanics'', \textit{Math. Proc.
  Cambridge Philos. Soc.} \textbf{99}(3)  (1986) 565--587.

\bibitem{art:Carinena_Lopez92}
J.F. {Cari\~{n}ena} and C.~{L\'{o}pez}, ``The time-evolution operator for
  higher-order singular {Lagrangians}'', \textit{Internat. J. Modern Phys. A}
  \textbf{7}(11)  (1992) 2447--2468.

\bibitem{art:Colombo_Jimenez_Martin12}
L.~{Colombo}, F.~{Jim\'{e}nez}, and D.~{Mart\'{\i}n de Diego}, ``Optimal
  {Control} and higher-order mechanics for systems with symmetries'',
  {arXiv}:1209.6315 [math-ph],  2012.

\bibitem{art:Colombo_Jimenez_Martin12-2}
L.~{Colombo}, F.~{Jim\'{e}nez}, and D.~{Mart\'{\i}n de Diego}, ``Second-order
  euler-poincar\'e equations for trivial principal bundles'', \textit{AIP
  Conference Proceddings} \textbf{1460} (2012) 185--191.

\bibitem{art:Colombo_Martin11}
L.~{Colombo} and D.~{Mart\'{\i}n de Diego}, ``On the geometry of higher-order
  variational problems on {Lie} groups'', {arXiv}:1104.3221 [math-ph],  2011.

\bibitem{art:Colombo_Martin_Zuccalli10}
L.~{Colombo}, D.~{Mart\'{\i}n de Diego}, and M.~{Zuccalli}, ``Optimal control
  of underactuated mechanical systems: a geometric approach'', \textit{J. Math.
  Phys.} \textbf{51}(8)  (2010) 083519.

\bibitem{art:Cortes_DeLeon_Marrero_Martin_Martinez06}
J.~{Cort{\'e}s}, M.~{de Le{\'o}n}, J.C. {Marrero}, D.~{Mart{\'{\i}}n de Diego},
  and E.~{Mart{\'{\i}}nez}, ``A survey of {L}agrangian mechanics and control on
  {L}ie algebroids and groupoids'', \textit{Int. J. Geom. Methods Mod. Phys.}
  \textbf{3}(3)  (2006) 509--558.

\bibitem{art:Cortes_deLeon_Martin_Martinez02}
J.~{Cort{\'e}s}, M.~{de Le{\'o}n}, D.~{Mart{\'{\i}}n de Diego}, and
  S.~{Mart{\'{\i}}nez}, ``Geometric description of vakonomic and nonholonomic
  dynamics. {C}omparison of solutions'', \textit{SIAM J. Control Optim.}
  \textbf{41}(5)  (2002) 1389--1412.

\bibitem{art:Cortes_Martinez_Cantrijn02}
J.~{Cort\'{e}s}, S.~{Mart\'{\i}nez}, and F.~{Cantrijn}, ``Skinner-{Rusk}
  approach to time-dependent mechanics'', \textit{Physics Letters A}
  \textbf{300} (2002) 250--258.

\bibitem{art:DeLeon_Jimenez_Martin12}
M.~{de Le{\'o}n}, F.~{Jim{\'e}nez}, and D.~{Mart{\'{\i}}n de Diego},
  ``Hamiltonian dynamics and constrained variational calculus: continuous and
  discrete settings'', \textit{J. Phys. A} \textbf{45}(20)  (2012) 205204, 29.
  \MR{2924849}

\bibitem{art:DeLeon_Lacomba89}
M.~{de Le{\'o}n} and E.A. {Lacomba}, ``Lagrangian submanifolds and higher-order
  mechanical systems'', \textit{J. Phys. A} \textbf{22}(18)  (1989) 3809--3820.

\bibitem{art:DeLeon_Marrero_Martinez05}
M.~{de Le{\'o}n}, J.C. {Marrero}, and E.~{Mart{\'{\i}}nez}, ``Lagrangian
  submanifolds and dynamics on {L}ie algebroids'', \textit{J. Phys. A}
  \textbf{38}(24)  (2005) R241--R308.

\bibitem{art:DeLeon_Martin94_1}
M.~{de Le{\'o}n} and D.~{Mart{\'{\i}}n de Diego}, ``Classification of
  symmetries for higher order {L}agrangian systems'', \textit{Extracta Math.}
  \textbf{9}(1)  (1994) 32--36.

\bibitem{art:DeLeon_Pitanga_Rodrigues94}
M.~{de Le{\'o}n}, P.~{Pitanga}, and P.R. {Rodrigues}, ``Symplectic reduction of
  higher order {L}agrangian systems with symmetry'', \textit{J. Math. Phys.}
  \textbf{35}(12)  (1994) 6546--6556.

\bibitem{book:DeLeon_Rodrigues85}
M.~{de Le\'on} and P.R. {Rodrigues}, \emph{Generalized classical mechanics and
  field theory}, North-Holland Math. Studies, vol. 112, Elsevier Science
  Publishers B.V., Amsterdam 1985.

\bibitem{art:Echeverria_Lopez_Marin_Munoz_Roman04}
A.~{Echeverr\'{\i}a-Enr\'{\i}quez}, C.~{L\'{o}pez}, J.~{Mar\'{\i}n-Solano},
  M.C. {Mu\~{n}oz-Lecanda}, and N.~{Rom\'{a}n-Roy}, ``Lagrangian-{Hamiltonian}
  unified formalism for field theory'', \textit{J. Math. Phys.} \textbf{45}(1)
  (2004) 360--380.

\bibitem{art:GayBalmaz_Holm_Meier_Ratiu_Vialard12_1}
F.~{Gay-Balmaz}, D.D. {Holm}, D.M. {Meier}, T.S. {Ratiu}, and F.-X. {Vialard},
  ``Invariant higher-order variational problems'', \textit{Comm. Math. Phys.}
  \textbf{309}(2)  (2012) 413--458.

\bibitem{art:GayBalmaz_Holm_Meier_Ratiu_Vialard12_2}
F.~{Gay-Balmaz}, D.D. {Holm}, D.M. {Meier}, T.S. {Ratiu}, and F.-X. {Vialard},
  ``Invariant higher-order variational problems {II}'', \textit{J. Nonlinear
  Sci.} \textbf{22}(4)  (2012) 553--597.

\bibitem{art:GayBalmaz_Holm_Ratiu13}
F.~{Gay-Balmaz}, D.D. {Holm}, and T.S. {Ratiu}, ``Geometric dynamics of
  optimization'', \textit{Commun. Math. Sci.} \textbf{11}(1)  (2013) 163--231.

\bibitem{art:Gotay_Nester79}
M.J. {Gotay} and J.M. {Nester}, ``Presymplectic {Lagrangian} systems. {I}.
  {The} constraint algorithm and the equivalence theorem.'', \textit{Ann. Inst.
  H. Poincar\'e Sect. A} \textbf{30}(2)  (1979) 129--142.

\bibitem{art:Gotay_Nester80}
M.J. {Gotay} and J.M. {Nester}, ``Presymplectic {Lagrangian} systems. {II}.
  {The} second-order equation problem.'', \textit{Ann. Inst. H. Poincar\'e
  Sect. A} \textbf{32}(1)  (1980) 1--13.

\bibitem{art:Gotay_Nester_Hinds78}
M.J. {Gotay}, J.M. {Nester}, and G.~{Hinds}, ``Presymplectic manifolds and the
  {Dirac}-{Bergmann} theory of constraints'', \textit{J. Math. Phys.}
  \textbf{19}(11)  (1978) 2388--2399.

\bibitem{art:Gracia_Pons_Roman91}
X.~{Gr\`{a}cia}, J.M. {Pons}, and N.~{Rom\'{a}n-Roy}, ``Higher-order
  {Lagrangian} systems: {Geometric} structures, dynamics and constraints'',
  \textit{J. Math. Phys.} \textbf{32}(10)  (1991) 2744--2763.

\bibitem{art:Grillo_Zuccalli12}
S.~{Grillo} and M.~{Zuccalli}, ``Variational reduction of {L}agrangian systems
  with general constraints'', \textit{J. Geom. Mech.} \textbf{4}(1)  (2012)
  49--88.

\bibitem{art:Hinkle_Muralidharan_Fletcher12}
J.~{Hinkle}, P.~{Muralidharan}, and P.T. {Fletcher}, ``Polynomial regression on
  {Riemannian} manifolds'', \textit{Computer Vision ECCV 2012. Lectures notes
  in Computer Science} \textbf{7574} (2012) 1--14.

\bibitem{book:Holm08}
D.D. {Holm}, \emph{Geometric mechanics. {P}art {II}}, , Imperial College Press,
  London 2008Rotating, translating and rolling.

\bibitem{art:Hussein_Bloch07}
I.I. {Hussein} and A.M. {Bloch}, ``Dynamic coverage optimal control for
  multiple spacecraft interferometric imaging'', \textit{J. Dyn. Control Syst.}
  \textbf{13}(1)  (2007) 69--93.

\bibitem{art:Hussein_Bloch08}
I.I. {Hussein} and A.M. {Bloch}, ``Optimal control of underactuated
  nonholonomic mechanical systems'', \textit{IEEE Trans. Automat. Control}
  \textbf{53}(3)  (2008) 668--682.

\bibitem{art:Iglesias_Marrero_Martin_Sosa08}
D.~{Iglesias}, J.C. {Marrero}, D.~{Mart{\'{\i}}n de Diego}, and D.~{Sosa},
  ``Singular {L}agrangian systems and variational constrained mechanics on
  {L}ie algebroids'', \textit{Dyn. Syst.} \textbf{23}(3)  (2008) 351--397.

\bibitem{art:Jacobs_Ratiu_Desbrun12}
H.O. {Jacobs}, T.S. {Ratiu}, and M.~{Desbrun}, ``On the coupling between an
  ideal fluid and immersed particles'', {arXiv}:1208.6561 [math-ph],  2012.

\bibitem{art:Krupkova00}
O.~{Krupkova}, ``Higher-order mechanical systems with constraints'', \textit{J.
  Math. Phys.} \textbf{41}(8)  (2000) 5304.5324.

\bibitem{art:Prieto_Roman11}
P.D. {Prieto-Mart\'{\i}nez} and N.~{Rom\'{a}n-Roy},
  ``{Lagrangian}-{Hamiltonian} unified formalism for autonomous higher-order
  dynamical systems'', \textit{J. Phys. A: Math. Teor.} \textbf{44}(38)  (2011)
  385203.

\bibitem{art:Prieto_Roman12_1}
P.D. {Prieto-Mart\'{\i}nez} and N.~{Rom\'{a}n-Roy}, ``Unified formalism for
  higher-order non-autonomous dynamical systems'', \textit{J. Math. Phys.}
  \textbf{53}(3)  (2012) 032901.

\bibitem{book:Saunders89}
D.J. {Saunders}, \emph{The geometry of jet bundles}, London Mathematical
  Society, Lecture notes series, vol. 142, Cambridge University Press,
  Cambridge, New York 1989.

\bibitem{art:Saunders04}
D.J. {Saunders}, ``Prolongations of {L}ie groupoids and {L}ie algebroids'',
  \textit{Houston J. Math.} \textbf{30}(3)  (2004) 637--655 (electronic).

\bibitem{art:Skinner_Rusk83}
R.~{Skinner} and R.~{Rusk}, ``Generalized {Hamiltonian} dynamics. {I}.
  {Formulation} on ${T^*Q} \oplus {TQ}$'', \textit{J. Math. Phys.}
  \textbf{24}(11)  (1983) 2589--2594.

\bibitem{art:Vitagliano10}
L.~{Vitagliano}, ``The {Lagrangian}-{Hamiltonian} {Formalism} for {Higher}
  {Order} {Field} {Theories}'', \textit{J. Geom. Phys.} \textbf{60} (2010)
  857--873.

\end{thebibliography}
\bibliographystyle{AMS_Mod}
}

\end{document}